\newcommand\be{\begin{equation}}
\newcommand\ee{\end{equation}}
\newcommand\ba{\begin{eqnarray}}
\newcommand\ea{\end{eqnarray}}
\newcommand{\fract}[2]{{\textstyle\frac{#1}{#2}}}
\newcommand{\bb}[1]{\mathbb{#1}}
\newtheorem{prop}{Proposition}
       \newtheorem{thm}{Theorem}[section]
       \newtheorem{lemma}[thm]{Lemma}
       \theoremstyle{remark}
        \newtheorem*{remark}{{\bf Remark}}
         \newtheorem*{example}{{\bf Example}}
\begin{document}

\begin{titlepage}

\begin{center}

{\Large \bf Drinfel'd-Sokolov construction and exact solutions of  vector modified KdV hierarchy}

\vskip 1.5cm

{{\bf Panagiota Adamopoulou$^{\dagger}$ and Georgios Papamikos$^{\star}$ }} 

\vskip 0.8cm

{\footnotesize
$^{\dagger}$School of Mathematical \& Computer Sciences, Heriot-Watt University,\\
Edinburgh EH14 4AS, United Kingdom}
\\
{\footnotesize
$^{\star}$School of Mathematics, University of Leeds, Leeds LS2 9JT, United Kingdom}

\vskip 0.5cm

{\footnotesize {\tt E-mail: p.adamopoulou@hw.ac.uk, g.papamikos@leeds.ac.uk }}\\

\end{center}

\vskip 2.0cm

\centerline{\bf Abstract}

We construct the hierarchy of a multi-component generalisation of modified KdV equation and find exact solutions to its associated members. The construction of the hierarchy and its conservation laws is based on the Drinfel'd-Sokolov scheme, however, in our case the Lax operator contains a constant non-regular element of the underlying Lie algebra. We also derive the associated recursion operator of the hierarchy using the symmetry structure of the Lax operators. Finally, using the rational dressing method, we obtain the one soliton solution, and we find the one breather solution of general rank in terms of determinants.

\vfill

\end{titlepage}

\section{Introduction}

In this paper we study the vector modified Korteweg-de Vries equation (vmKdV)
\be \label{vmkdv}
{\bf u}_{t} + {\bf u}_{xxx} + \fract{3}{2} \| {\bf u}\| ^2 {\bf u}_x = 0 \,, \quad \mbox{with} \quad {\bf u} = {\bf u}(x,t) \in \mathbb{R}^N \,.
\ee
In what follows, we denote vectors by boldface, and the upper index $T$ denotes transposition, so that the standard Euclidean norm  is $\| {\bf u} \| =\sqrt{{\bf u}^T{\bf u}}$. When $N=1$, equation \eqref{vmkdv} reduces to the well-known modified KdV equation (mKdV)
$$u_t + u_{xxx} + \fract{3}{2} u^2u_x=0 \,,$$
while for $N=2$ one obtains the complex mKdV (also known as Hirota equation \cite{Hirota eq}) 
$$v_t + v_{xxx} + \fract{3}{2} |v|^2 v_x=0$$ 
for the complex field $v=u_1 + i u_2$.

Equation \eqref{vmkdv} appeared in \cite{JP vmkdv 1, JP vmkdv} in the study of the evolution of a curve in a $(N+1)$-dimensional Riemannian manifold, while several generalisations of the mKdV equation have been introduced by various authors such as, for example, Iwao and Hirota \cite{Iwao-Hirota}, Fordy and Athorne \cite{Fo-Ath} in association with symmetric spaces, Sokolov and Wolf \cite{Sokolov Wolf} using the symmetry approach, Svinolupov and Sokolov \cite{Svino, Svino-Sok} in relation to Jordan algebras,  and non-associative algebras in general  \cite{Sokolov vmkdv}. Moreover, the study of soliton solutions and their interactions for several multi-component generalisations of the scalar mKdV equation have been studied using the inverse scattering transform, or the Hirota, dressing, or other methods, see for example  \cite{Anco sol, Iwao-Hirota, Tsuchida-Wadati, Tsuchida, fenchenko khruslov, pelinovsky}. For the case of the vmKdV equation \eqref{vmkdv}, its integrability properties, such as the Lax representation, recursion operator and Hamiltonian structure, where presented by Sanders and Wang in \cite{JP vmkdv}, while the recursion operator and Hamiltonian structure where further investigated by Anco in \cite{Anco vmkdv}. Moreover, in \cite{miky titi}, efficient numerical integration schemes of \eqref{vmkdv} where considered. We note here that the vmKdV equation which is the third member of the vector nonlinear Schr{\"o}dinger (vNLS) hierarchy (see for example \cite{vNLS}) is of the form
$$
{\bf u}_t + {\bf u}_{xxx} +\fract{3}{4}{\bf u}^T {\bf u}_x {\bf u} + \fract{3}{4} \|{\bf u}\|^2 {\bf u}_x= 0 \,,
$$
thus, different from the vmKdV equation that we consider in this paper. A relation between \eqref{vmkdv} and the nonlinear Schr{\"o}dinger equation (NLS) is discussed in Section \ref{sec: so4}.

In this work we construct the hierarchy associated to equation \eqref{vmkdv}, and derive the one soliton and one breather solution of general rank for the whole hierarchy. A method for constructing integrable hierarchies based on a given Lie algebra $\mathfrak{g}$ was introduced by Drinfel'd and Sokolov in \cite{DS1, DS}, and later explored further in e.g. \cite{Wilson, deGroot, Feher, Mikh DS}, and relies on the idea of dressing or formal (in the spectral parameter) Darboux transformations. A basic ingredient in the construction presented in these works is a Lax operator containing a constant regular element of $\mathfrak{g}$. Here, the construction of the hierarchy and its corresponding conservation laws are derived based on the ideas of the seminal works \cite{DS1, DS}, considering, however, a Lax operator with a constant non-regular element of the underlying Lie algebra. An additional property of the Lax operator is the invariance under the Cartan involution introduced in Section \ref{sec: Lax}. In particular, in Section \ref{sec: Lax} we present the Lax structure for the vmKdV equation \eqref{vmkdv} and its associated symmetries that will be extensively used in the following sections. Then, Section \ref{sec: vmKdV hier} is dedicated to the construction of the vmKdV hierarchy and its conservation laws, while the commutativity of the members of the hierarchy is also proved.  The recursion operator for the hierarchy is also derived using an alternative method than those used in \cite{JP vmkdv, Anco vmkdv}. As mentioned previously, a relation to the NLS equation is discussed in Section \ref{sec: so4}. Finally, in Section \ref{sec: dressing}, we employ the method of rational dressing \cite{Zakharov Shabat II, Zakharov Mikhailov}, which is based on the concept of Darboux transformations, in order to construct  solutions for the hierarchy. We derive the one soliton solution, as well as the one breather solution of general rank  in terms of determinants.

\section{Lax structure} \label{sec: Lax}

We consider the following differential operator
\be\label{L}
\mathcal{L}(\lambda) = D_x - \mathcal{U}(\lambda) \quad \mbox{with} \quad \mathcal{U}(\lambda) = \lambda J + U \,,
\ee
which constitutes the spatial part of the Lax pair associated to the vmKdV equation \eqref{vmkdv}, see \cite{JP vmkdv}. Here $J$ and $U$ are elements of the Lie algebra $\mathfrak{g} :=  \mathfrak{so}_{N+2}(\mathbb{R})$ and are of the form
\be \label{J U}
J = \begin{pmatrix}
0 & 1 & {\bf 0}^T \\
-1 & 0 & {\bf 0}^T \\
{\bf 0} & \bf 0 & \bb{0}  \\
\end{pmatrix},
\quad
U = \begin{pmatrix}
0 & 0 & {\bf 0}^T \\
0 & 0 & {\bf u}^T \\
{\bf 0} & - \bf u & \bb{0}  \\
\end{pmatrix} \,,
\ee
where ${\bb 0}$ stands for the $N \times N$ zero matrix. When there is no ambiguity, we will use $0$ to denote a square zero matrix of any dimension. Also, $\lambda \in \overline{\mathbb{C}}=\mathbb{C}\cup\lbrace \infty \rbrace$ is a spectral parameter of  the linear spectral problem $\mathcal{L}(\lambda) \Psi = 0$, $\mathcal{U}(\lambda)$ is an element of the loop algebra $\mathfrak{g}[\lambda]$, and $\mathcal{L}(\lambda)$ is in the ring of differential operators $\mathfrak{g}[\lambda][D_x]$.

The Lax operator \eqref{L} is invariant under the action of a group of automorphisms generated by the following transformations
\ba
\mathfrak{t}: & \mkern-35mu \mathcal{L}(\lambda) \rightarrow - \mathcal{L}(\lambda)^{\dagger} \,, \label{t-aut}\\
\mathfrak{q}: &  \mathcal{L}(\lambda) \rightarrow  Q \mathcal{L}(-\lambda) Q^{-1} \,, \label{q-aut}\\
\mathfrak{r}: & \mkern-48mu   \mathcal{L}(\lambda) \rightarrow \mathcal{L}(\lambda^*)^{*} \label{r-aut} \,.
\ea
Here, $\mathcal{L}(\lambda)^{\dagger} := -D_x - \lambda J^T - U^T$ denotes the formal adjoint of $\mathcal{L}(\lambda)$,  the $*$ denotes complex conjugation, and $Q = \mbox{diag}(-1,1, \ldots,1)$. The reality requirement for the entries of $U$ is equivalent to the invariance of $\mathcal{L}$ under the action of $\mathfrak{r}$. Invariance of $\mathcal{L}$ under $\mathfrak{t}$ implies that the matrices $J, \, U$ are skew symmetric. Since the transformations $\mathfrak{t}, \mathfrak{q}, \mathfrak{r}$ are involutions and commute with each other they generate the group $\mathbb{Z}_2 \times \mathbb{Z}_2 \times \mathbb{Z}_2$. Such symmetry groups are known as reduction groups \cite{Mikh RG2, Mikh RG3} and have been extensively used in the theory of integrable systems, see for example \cite{Mikh RG1, Belavin Drinfeld}.

The automorphism $a\rightarrow QaQ^{-1}$ of the Lie algebra $\mathfrak{g}$ is an involution and hence its eigenvalues are $\pm 1$. We define the corresponding eigenspaces
\begin{equation}
\mathfrak{g}^{(j)}=\lbrace a\in \mathfrak{g}~|~QaQ^{-1}=(-1)^ja\rbrace, \quad j\in\mathbb{Z}_2,
\label{eq:eigenspac}
\end{equation}
which satisfy the following commutation relations
\begin{equation} \label{Cartan dec}
\left[\mathfrak{g}^{(i)},\mathfrak{g}^{(j)}\right]\subseteq \mathfrak{g}^{(i+j)},
\end{equation}
and thus define the $\mathbb{Z}_2$-grading $\mathfrak{g}=\mathfrak{g}^{(0)} \oplus \mathfrak{g}^{(1)}$ known as Cartan decomposition of a Lie algebra \cite{Helgason}. Such reductions to symmetric spaces have been used in \cite{Fo-Ku, Fo-Ath} to construct multi-component integrable equations of KdV, mKdV, and NLS type.  They have also appeared in \cite{vSG dress, vSG DT} in relation to a vectorial generalisation of the sine-Gordon equation. Elements that belong to either $\mathfrak{g}^{(0)}$ or $\mathfrak{g}^{(1)}$ are called homogeneous, and in particular elements of $\mathfrak{g}^{(0)}$ are called even while those of $\mathfrak{g}^{(1)}$ are called odd, and they are of the form
\begin{equation}
\left(
\begin{array}{ccc}
0 & 0 & {\bf 0}^T\\
0 & 0 & *_1^T\\
{\bf 0} & -*_1 & *
\end{array}
\right),
\quad 
\left(
\begin{array}{ccc}
0 & *_1 & *^T_2\\
-*_1 & 0 & {\bf 0}^T\\
-*_2 & {\bf 0} &  \bb{0} 
\end{array}
\right),
\label{eq:even-odd}
\end{equation}
respectively. In particular, in \eqref{J U} $U$ is even while $J$ is odd. We further define the map $\mbox{ad}_J(a) := \left[ J , a \right], \forall a \in \mathfrak{g}$, and the spaces $\mathfrak{K} = \text{Ker ad}_J$ and $\mathfrak{I} =  \text{Im ad}_J$, such that 
\be \label{alg dec}
\mathfrak{g} = \mathfrak{K} \oplus  \mathfrak{I} \,.
\ee
Using the $\mathbb{Z}_2$ gradation given above, we further define the  spaces $\mathfrak{K}^{(i)} = \mathfrak{K} \cap \mathfrak{g}^{(i)}$ and $\mathfrak{I}^{(i)} = \mathfrak{I} \cap \mathfrak{g}^{(i)}$, with $i \in \mathbb{Z}_2$, which obey the following commutation relations
\be \label{Ker dec}
[ \mathfrak{K}^{(1)}, \mathfrak{K}^{(1)} ]=0 \,, ~ [ \mathfrak{K}^{(0)}, \mathfrak{K}^{(0)} ] \subseteq  \mathfrak{K}^{(0)}, ~ [ \mathfrak{K}^{(0)}, \mathfrak{K}^{(1)} ]=0 \,,
\ee
and thus provide the decomposition $\mathfrak{K} = \mathfrak{K}^{(0)} \oplus \mathfrak{K}^{(1)}$. We remark here that the construction in which $\mathfrak{K}$ is an abelian subalgebra of $\mathfrak{g}$ was presented in \cite{DS}, as well as used in later works e.g. \cite{Fo-Ku, Wilson, deGroot}. In the current situation, although $\mathfrak{K}$ is not abelian, the decomposition of $\mathfrak{K}$ to odd and even parts is crucial for the construction of the vmKdV hierarchy and its conservation laws. In the case where $N=2$, $\mathfrak{K}^{(0)}$ is abelian and thus from \eqref{Ker dec} it follows that $\mathfrak{K}$ is also abelian. We discuss this situation further in Section \ref{sec: so4}. We will use the above definitions together with the properties of the $\mathcal{L}(\lambda)$ operator \eqref{L} in the following section, where we construct the vmKdV hierarchy  following the ideas of Drinfel'd and Sokolov.

\section{The vector mKdV hierarchy} \label{sec: vmKdV hier}

The vmKdV hierarchy is defined as the set of all compatibility conditions $[\mathcal{L}(\lambda), \mathcal{A}_k(\lambda)] = 0$, $k \in \mathbb{Z}_{+}$,  of the systems of spectral problems $\mathcal{L}(\lambda) \Psi = 0$, $\mathcal{A}_k(\lambda) \Psi = 0$, where $\mathcal{L}(\lambda)$ is given in \eqref{L} and $\mathcal{A}_k(\lambda)$ are of the form 
\be\label{A}
\mathcal{A}_k(\lambda) = D_{t_k} - \mathcal{V}_k(\lambda). 
\ee
Each $\mathcal{V}_k(\lambda)$ is an appropriately chosen matrix-valued polynomial in $\lambda$ of degree $k$. The case where $k=-1$ was considered in \cite{vSG dress, vSG DT} in relation to the vector sine-Gordon equation.

For given $k$, the compatibility condition $[\mathcal{L}, \mathcal{A}_k] = 0$  is equivalent to the Lax equation
\be\label{Lax eq}
D_{t_k} \mathcal{L} = [  \mathcal{V}_k, \mathcal{L}  ] \,.
\ee
The left-hand side of equation \eqref{Lax eq} is equal to
\be \label{Lt}
D_{t_k} \mathcal{L} = - D_{t_k} U= - \begin{pmatrix}
0 & 0 & {\bf 0}^T \\
0 & 0 &  {\bf u}_{t_k}^T \\
{\bf 0} &  - {\bf u}_{t_k} & \bb{0}  \\
\end{pmatrix} \,, 
\ee
therefore, if the commutator $ [  \mathcal{V}_k, \mathcal{L}  ] $ is of the same form as \eqref{Lt} then we obtain consistent evolutionary  equations, integrable in the Lax sense. Hence, the problem of determining the vmKdV hierarchy reduces to finding all those $\mathcal{V}_k$ such that $[  \mathcal{V}_k, \mathcal{L}  ]$ is $\lambda$-independent and skew-symmetric  matrix of the form \eqref{Lt}. To this end, we define the set of formal series 
$$
\mathfrak{g}(( \lambda^{-1}))^{\langle \mathfrak{q} \rangle} :=\bigg \lbrace \sum_{i = -\infty}^{l} a_i \lambda^i \,, a_i \in \mathfrak{g} \,, l \in \mathbb{Z}_+ \,, \mathfrak{q}(a_i)= (-1)^{i\bmod{2}}a_i \bigg  \rbrace
$$ 
and the projections of $a(\lambda) \in \mathfrak{g}(( \lambda^{-1}))^{\langle\mathfrak{q}\rangle}$ to its polynomial in $\lambda$ part $a(\lambda)_{+}$, and the part consisting of strictly negative powers in $\lambda$, $a(\lambda)_{-}=a(\lambda)-a(\lambda)_+$. The following Lemma gives sufficient conditions for the characterisation of the $\mathcal{V}_k(\lambda)$ as described above.

\begin{lemma}\label{lem split}
Let $a(\lambda) = \sum_{i = - \infty}^{k} a_i \lambda^i \in \mathfrak{g}(( \lambda^{-1}))^{\langle \mathfrak{q} \rangle}$, with $a_k$ non-zero, and such that $[a(\lambda), \mathcal{L}(\lambda)] = 0$. Then, $[a(\lambda)_{+}, \mathcal{L}(\lambda)]$ is $\lambda$-independent and of the form \eqref{Lt}. Moreover, if $k$ is an odd positive integer then $a_k=J$.  
\end{lemma}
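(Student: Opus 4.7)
The plan is to split $a(\lambda) = a_+ + a_-$ and exploit the disjoint $\lambda$-supports of the two sides of $[a_+, \mathcal{L}] = -[a_-, \mathcal{L}]$, which is an immediate consequence of the hypothesis $[a, \mathcal{L}]=0$. The whole argument reduces to matching $\lambda$-powers on the two sides and then using the $\mathbb{Z}_2$-grading to identify where the surviving terms live.

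First I would expand
$$[a_\pm, \mathcal{L}] = -\partial_x a_\pm - \lambda [a_\pm, J] - [a_\pm, U]$$
and record the ranges of powers: $[a_+, \mathcal{L}]$ is polynomial in $\lambda$ of degree at most $k+1$, with the top coefficient equal to $-[a_k, J]$, while $[a_-, \mathcal{L}]$ contains only powers $\lambda^{i}$ with $i\le 0$, and its $\lambda^0$ coefficient equals $-[a_{-1}, J]$. Since these two must be negatives of each other, the only surviving power is $\lambda^0$. This immediately gives $\lambda$-independence of $[a_+, \mathcal{L}]$, and the vanishing of the $\lambda^{k+1}$ coefficient yields $[a_k, J] = 0$, so $a_k \in \mathfrak{K}$.

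Next, to pin down the shape of the surviving $\lambda^0$ term, I would read it off from the negative side as $[a_{-1}, J]$. The $\mathfrak{q}$-grading condition in the definition of $\mathfrak{g}((\lambda^{-1}))^{\langle \mathfrak{q}\rangle}$ forces $a_{-1}\in \mathfrak{g}^{(1)}$, hence $[a_{-1},J]\in [\mathfrak{g}^{(1)},J]\subseteq \mathfrak{I}^{(0)}$. It then remains to observe, by the explicit block forms in \eqref{eq:even-odd} and \eqref{J U}, that elements of $\mathfrak{I}^{(0)}$ are precisely those matrices whose $\mathfrak{K}^{(0)}$-part (the bottom-right $N\times N$ block of the even decomposition) vanishes, i.e.\ matrices of exactly the shape \eqref{Lt}. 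For the final assertion, $k$ odd puts $a_k\in \mathfrak{g}^{(1)}\cap\mathfrak{K}=\mathfrak{K}^{(1)}$; a short direct computation with a generic odd element parametrised by a scalar $\alpha$ and a vector $\mathbf{v}$ shows that $\mathrm{ad}_J$ kills exactly the $\mathbf{v}=0$ line, so $\mathfrak{K}^{(1)}=\mathbb{R}J$. Thus $a_k$ is a nonzero scalar multiple of $J$, which one normalises to $a_k = J$.

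The main obstacle is not conceptual but organisational: one must keep careful track of the $\mathfrak{q}$-parities of the coefficients $a_i$ along the expansion, and verify by direct block-matrix computation that the intersections $\mathfrak{K}^{(1)}$ and $\mathfrak{I}^{(0)}$ have the advertised low-dimensional descriptions. Once those two identifications are in hand, the $\lambda$-degree bookkeeping does the rest automatically. A mild subtlety is to remember that $\mathcal{L}$ is a differential operator, so the $\partial_x a_\pm$ contribution must be included when computing $[a_\pm, \mathcal{L}]$; fortunately it contributes only $\lambda$-powers already present in $a_\pm$, so it does not interfere with the leading-order argument.
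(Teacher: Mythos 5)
Your argument is correct in substance and follows the paper's proof almost step for step: the same splitting $[a_+,\mathcal{L}]=-[a_-,\mathcal{L}]$ and comparison of $\lambda$-supports, the same identification of the surviving term with $[a_{-1},J]\in\mathfrak{I}^{(0)}$ (using the $\mathfrak{q}$-parity of $a_{-1}$), and the same block computations showing $\mathfrak{I}^{(0)}$ consists exactly of matrices of the form \eqref{Lt} and $\mathfrak{K}^{(1)}=\mathbb{R}J$. All of that matches the paper.

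The one genuine gap is the final normalisation. The coefficients $a_i$ are matrix-valued \emph{functions} of $x$, so the pointwise statement $a_k\in\mathfrak{K}^{(1)}=\mathbb{R}J$ only gives $a_k=c\,J$ with $c$ a scalar function; you cannot simply ``normalise'' this to $a_k=J$, because rescaling $a(\lambda)$ by $1/c(x)$ with $c$ non-constant destroys the hypothesis $[a(\lambda),\mathcal{L}(\lambda)]=0$, since $\mathcal{L}$ contains $D_x$. The missing step is to look at the $\lambda^{k}$ coefficient of $[a(\lambda),\mathcal{L}(\lambda)]=0$, i.e. the $i=k$ case of the paper's equations \eqref{a i}, $D_x a_k=[U,a_k]+[J,a_{k-1}]$: with $a_k=cJ$ the left-hand side is $c_xJ$, whose $(1,2)$ entry is $c_x$, while $[U,J]$ and $[J,a_{k-1}]$ (the latter because $a_{k-1}$ is even) lie in $\mathfrak{I}^{(1)}$ and so have vanishing $(1,2)$ entry; hence $c_x=0$, and only then is the constant $c$ legitimately set to one, as the paper does. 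With that one line added, your proof coincides with the paper's.
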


\begin{proof}
The condition $[a(\lambda), \mathcal{L}(\lambda)] = 0$ implies that $[ a(\lambda)_{+}, \mathcal{L}(\lambda) ] = - [a(\lambda)_{-}, \mathcal{L}(\lambda)]$. The left-hand side of this relation is polynomial in $\lambda$, while the right-hand side contains non-positive powers of $\lambda$. It follows that
\be \label{res}
[a(\lambda)_{+}, \mathcal{L}(\lambda)] = - [J,a_{-1}]  \in \mathfrak{I}^{(0)}
\ee
and thus is of the same form as \eqref{Lt}. Then, expanding $[a(\lambda),\mathcal{L}(\lambda)]=0$ in $\lambda$ we obtain the following equations for the coefficients $a_i$ of $a(\lambda)$
\be \label{a i}
 \left[ J, a_{k} \right] = 0 \,, \quad 
D_{x} a_i = \left[U, a_i \right] +  \left[ J, a_{i-1} \right] \,, \quad i \leq k \,. 
\ee
If $k$ is odd, it follows that $a_k \in \mathfrak{K}^{(1)} = \mbox{span}_{\mathbb{R}}(J)$, thus $a_k = c \, J$, with $c$ a scalar function. From the second equation in \eqref{a i}, for $i=k$ we obtain that $D_x c=0$. Hence, since $c$ is $x$-independent it can be set to one without affecting the Lax equation. 
\end{proof}

From the discussion so far it follows that those $\mathcal{V}_k(\lambda)$ which give  consistent  Lax equations \eqref{Lax eq} are of the form $a(\lambda)_{+}$, with $a(\lambda)$ as in Lemma \eqref{lem split}. Moreover, above we show that if $k$ is odd then $a_k$ is determined uniquely. If $k$ is even then $a_k \in \mathfrak{K}^{(0)}$ and then equations \eqref{a i} will involve terms which are in the image of $D_x^{-1}$, hence the Lax equations \eqref{Lax eq} will be non-local. Such types of  evolutionary non-local equations were studied in, for example, \cite{boomerons} and have been shown to admit very interesting solutions (boomerons). Also, in \cite{Fo-Ku} the existence of such non-local equations within the NLS hierarchy was remarked. In this work we will primarily consider the case of odd $k$ where $a_k = J$.

From Lemma \ref{lem split} it follows that in order to determine $a(\lambda)$ we have to study equation $[a(\lambda), \mathcal{L}(\lambda)] = 0$, which is equivalent to the Lax equation
\be\label{Lax a}
D_x a(\lambda) = \left[ \mathcal{U}(\lambda), a(\lambda) \right] \,.
\ee
Equation \eqref{Lax a} implies that
\be \label{sol Lax 1}
a(\lambda) = P(\lambda) C(\lambda) P(\lambda)^{-1} \,, \quad \mbox{with} \quad D_x P(\lambda) = \mathcal{U}(\lambda) P(\lambda) 
\ee
and $C(\lambda)$ a constant matrix with respect to $x$. Since we are interested in autonomous equations we also assume that $C(\lambda)$ is independent of $t_k$. Moreover, the fundamental solution $P(\lambda)$ of the differential equation \eqref{sol Lax 1} is an element of the corresponding loop group
$$
L G^{\langle \mathfrak{q} \rangle}  := \lbrace  A(\lambda) \in SO_{N+2}(\mathbb{R})\,, \: \lambda \in \overline{\mathbb{C}} \,, \mathfrak{q}\left(A(\lambda) \right)  = A(\lambda)       \rbrace \,.
$$
Assuming that $U$ is bounded as $\lambda \rightarrow \infty$,  $P(\lambda)$ with initial condition $P(x=0, \lambda) = \mathbb{1}$ behaves as $P(\lambda) \sim \mbox{e}^{\lambda x J}$. This asymptotic behaviour suggests the transformation 
\be \label{P M exp}
P(\lambda) = \mathcal{M}(\lambda) \mbox{e}^{\lambda x J} \,,
\ee
where $\mathcal{M}(\lambda) = P(\lambda) \mbox{e}^{-\lambda x J} \in LG^{\langle \mathfrak{q} \rangle} $  is of the form
\be \label{M lambda}
\mathcal{M}(\lambda) = \mathbb{1} + \lambda^{-1} \mathcal{M}_1 + \cdots \,.
\ee  
Substituting \eqref{P M exp} in the differential equation \eqref{sol Lax 1} we obtain the following equation for $\mathcal{M}(\lambda)$ 
\be \label{D-S 1}
D_{x} \mathcal{M} + \lambda \left[ \mathcal{M}, J \right] = U \mathcal{M} \,.
\ee
Hence,  the Lax matrices $\mathcal{V}_k(\lambda)$ are of the form $\mathcal{V}_k(\lambda) = \left( \mathcal{M}(\lambda) e^{\lambda x J} C(\lambda) e^{-\lambda x J} \mathcal{M}(\lambda)^{-1} \right)_{+}$.

\begin{prop} \label{prop M sol}
Equation \eqref{D-S 1} admits a formal solution $\mathcal{M}(\lambda) \in LG^{\langle \mathfrak{q} \rangle}$ of the form \eqref{M lambda}, where the coefficients $\mathcal{M}_n$ can be found recursively.
\end{prop}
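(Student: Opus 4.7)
The plan is to substitute the ansatz $\mathcal{M}(\lambda) = \mathbb{1} + \sum_{n \ge 1} \lambda^{-n} \mathcal{M}_n$ from \eqref{M lambda} into \eqref{D-S 1} and match coefficients of powers of $\lambda$. The order $\lambda^0$ term gives $[\mathcal{M}_1, J] = U$, and for $n \ge 1$ the coefficient of $\lambda^{-n}$ yields the recursion
\[ \mbox{ad}_J\, \mathcal{M}_{n+1} \;=\; D_x \mathcal{M}_n - U \mathcal{M}_n \,, \]
so $\mathcal{M}_{n+1}$ enters only through $\mbox{ad}_J$, while the right-hand side is determined by the previously computed coefficients.

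To invert, I would invoke the decomposition \eqref{alg dec}: the operator $\mbox{ad}_J$ restricts to a bijection on $\mathfrak{I}$, so the $\mathfrak{I}$-component of $\mathcal{M}_{n+1}$ is uniquely determined by the $\mathfrak{I}$-part of the right-hand side. The base step is consistent because $U \in \mathfrak{I}^{(0)}$, as one reads off directly from \eqref{J U} once one notes that $\mathfrak{K}$ consists of the block-diagonal matrices with respect to the $2 \oplus N$ decomposition of the indices. The $\mathfrak{K}$-component of each $\mathcal{M}_{n+1}$ is not pinned down by the recursion itself; it must be fixed by the requirement $\mathcal{M}(\lambda) \in LG^{\langle \mathfrak{q} \rangle}$. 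A clean way to enforce this is to switch to the exponential parametrisation $\mathcal{M} = \exp(\mathcal{F})$ with $\mathcal{F} = \sum_{n \ge 1}\lambda^{-n}\mathcal{F}_n$ and $\mathcal{F}_n \in \mathfrak{g}^{(n \bmod 2)}$, which automatically encodes both orthogonality and the $\mathfrak{q}$-reduction via the $\mathbb{Z}_2$-grading \eqref{Cartan dec}. The recursion then reads $\mbox{ad}_J \mathcal{F}_{n+1} = R_n$ for an explicit polynomial $R_n$ in $\mathcal{F}_1,\ldots,\mathcal{F}_n$, $U$ and their $x$-derivatives; the $\mathfrak{I}$-part of $\mathcal{F}_{n+1}$ is fixed as above, and the residual $\mathfrak{K}$-freedom may be removed by setting $\mathcal{F}_{n+1}^{\mathfrak{K}} = 0$.

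The main obstacle, compared with the classical regular Drinfel'd-Sokolov setting, is that $\mathfrak{K}$ is not one-dimensional, so a genuine gauge freedom persists at every order. The delicate verification is that $R_n$ lies in the correct parity eigenspace, $R_n \in \mathfrak{g}^{((n+1)\bmod 2)} \cap \mathfrak{I}$, so that $\mathcal{F}_{n+1}$ can indeed be chosen in $\mathfrak{g}^{((n+1)\bmod 2)}$. This amounts to tracking the $\mathbb{Z}_2$-parities through the recursion and invoking the commutation relations \eqref{Ker dec} together with the grading \eqref{Cartan dec}, and I expect this to constitute the principal technical content of the argument.
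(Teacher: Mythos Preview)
Your overall plan---expand in $\lambda^{-1}$, invert $\mbox{ad}_J$ on $\mathfrak{I}$, and use the group condition to fix what remains---is exactly the paper's. The gap is in your handling of the $\mathfrak{K}$-component. You propose, in the exponential parametrisation, to set $\mathcal{F}_{n+1}^{\mathfrak K}=0$ and then verify $R_n\in\mathfrak{I}$ via parity-tracking and \eqref{Ker dec}. This fails already at the first nontrivial step: with $\mathcal{F}_1^{\mathfrak K}=0$ one has $\mathcal{F}_1=\mbox{ad}_J U$, and then $R_1=D_x\mathcal{F}_1+\tfrac12[\mathcal{F}_1,U]$ has $\mathfrak{K}$-part $-\tfrac12\|{\bf u}\|^2 J\neq 0$, since a direct block computation gives $[\mbox{ad}_J U,\,U]=-\|{\bf u}\|^2 J\in\mathfrak{K}^{(1)}$. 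The relations \eqref{Ker dec} constrain only brackets \emph{inside} $\mathfrak{K}$; they say nothing about $[\mathfrak{I},\mathfrak{I}]$, which is where this obstruction lives. (Incidentally, since $\mbox{ad}_J$ flips the $\mathbb{Z}_2$-grading, the correct parity is $R_n\in\mathfrak{g}^{(n\bmod 2)}$, not $\mathfrak{g}^{((n+1)\bmod 2)}$.)

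The paper's resolution is that the $\mathfrak{K}$-projection of the recursion at level $n$ is not residual gauge but a first-order differential constraint on $P_{\mathfrak{K}}(\mathcal{M}_n)$; integrating it forces $D_x^{-1}$-terms into the coefficients, and the paper says so explicitly. Only the integration constants are genuinely free, and it is those that the requirement $\mathcal{M}(\lambda)\in LG^{\langle\mathfrak{q}\rangle}$ fixes. In your exponential picture orthogonality is automatic, so the integration constants become the true residual gauge (right multiplication by a constant element commuting with $J$), but the non-local body of $P_{\mathfrak{K}}(\mathcal{F}_n)$ cannot be set to zero.
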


\begin{proof}
Comparing powers of $\lambda$ in  \eqref{D-S 1} we obtain the following equations for $\mathcal{M}_n$
\be \label{Mi} 
\mbox{ad}_J (\mathcal{M}_1) = -U \,,\quad \mbox{ad}_J (\mathcal{M}_{n+1}) = D_x \mathcal{M}_{n} - U \mathcal{M}_n \,, \quad n=1,2,\ldots \,.
\ee
To solve the above equations we use the direct sum decomposition \eqref{alg dec}. The restriction of $\text{ad}_J$ to $\mathfrak{I}$ is invertible with the inverse given by
$$ 
\mbox{ad}_J^{-1} = -\frac{5}{4} \mbox{ad}_J - \frac{1}{4} \mbox{ad}_J^3 \,.
$$
Moreover, it follows that the projectors to $\mathfrak{I}$ and $\mathfrak{K}$ are 
$$
P_{\mathfrak{I}} = -\frac{5}{4} \mbox{ad}_J^2 - \frac{1}{4} \mbox{ad}_J^4 \quad \text{and} \quad P_{\mathfrak{K}} = \mbox{id}- P_{\mathfrak{I}} \,, 
$$
respectively. In order to determine the $\mathcal{M}_n$ we apply projectors $P_{\mathfrak{I}}$ and $P_{\mathfrak{K}}$ in equations \eqref{Mi} and solve them recursively. In particular,
$$
P_{\mathfrak{I}}(\mathcal{M}_{n+1})=\mbox{ad}_J^{-1}(D_x\mathcal{M}_n-U\mathcal{M}_n).
$$
The projections in  ${\mathfrak{K}}$ will lead to integrations and thus non-local terms in ${\bf u}$ and its derivatives in $x$. Any remaining freedom in the components of $P_{\mathfrak{K}}(\mathcal{M}_n)$ can be fixed by requiring  $\mathcal{M}(\lambda)\in LG^{\langle \mathfrak{q}\rangle}$. 
\end{proof}

Equation \eqref{D-S 1} can be re-written as
\be \label{intertw}
\mathcal{L}(\lambda)\mathcal{M}(\lambda)=\mathcal{M}(\lambda)\mathcal{L}_0(\lambda)
\ee
where $\mathcal{L}_0(\lambda)=D_x-\lambda J$. This means that the matrix $\mathcal{M}(\lambda)$ is a formal Darboux-Dressing matrix \cite{matveev salle, Zakharov Shabat II} of the trivial $\mathcal{L}_0(\lambda)$ operator which corresponds to the potential $\textbf{u}_0=\textbf{0}$. In Section \ref{sec: dressing}, we construct a closed form Darboux matrix $\mathcal{M}(\lambda)$  which maps $\textbf{u}_0$ to the one soliton and one breather solution of the vmKdV hierarchy. Next we prove that the formal Darboux matrix $\mathcal{M}(\lambda)$ admits a  factorisation which simplifies the construction of the vmKdV hierarchy and also provides its conservation laws. Such factorisations also appear in e.g. \cite{DS, Mikh DS}.

\begin{prop} \label{prop M=WH}
The Darboux matrix $\mathcal{M}(\lambda)$ can be factorised as 
\be\label{factor DT} 
\mathcal{M}(\lambda) = W(\lambda) H(\lambda) \,,
\ee
where $W(\lambda)$, $H(\lambda)\in LG^{\left< \mathfrak{q} \right>}$ and are of the form 
\be \label{W H asym}
W(\lambda) = \mathbb{1} + \frac{1}{\lambda} W_1 + \cdots \,, \quad H(\lambda) = \mathbb{1} + \frac{1}{\lambda} H_1 + \cdots \,, 
\ee
satisfying $H(\lambda)J H(\lambda)^{-1} = J$ and 
\be \label{D-S 2}
D_{x} W + W h + \lambda \left[ W, J \right] = U W\,, \quad h = H_x H^{-1} \in \mathfrak{K}((\lambda^{-1}))^{\langle \mathfrak{q} \rangle} \,.
\ee
\end{prop}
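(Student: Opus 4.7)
My plan is to construct $W$ and $H$ recursively by exploiting the direct sum decomposition $\mathfrak{g} = \mathfrak{K} \oplus \mathfrak{I}$ from equation \eqref{alg dec} to split each order of $\mathcal{M}$ into its kernel and image parts. To make the factorisation unique I will require $H$ to take values in the connected subgroup of $LG^{\langle \mathfrak{q} \rangle}$ whose Lie algebra is $\mathfrak{K}((\lambda^{-1}))^{\langle \mathfrak{q} \rangle}$; this will automatically yield $HJH^{-1}=J$. Concretely, I will seek $H(\lambda) = \exp(\eta(\lambda))$ with $\eta = \sum_{n \geq 1} \eta_n \lambda^{-n}$, $\eta_n \in \mathfrak{K}^{(n \bmod 2)}$, and $W(\lambda) = \exp(w(\lambda))$ with $w = \sum_{n \geq 1} w_n \lambda^{-n}$, $w_n \in \mathfrak{I}^{(n \bmod 2)}$.

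The Baker--Campbell--Hausdorff formula then gives $\log \mathcal{M} = \log(e^{w} e^{\eta}) = w + \eta + Q(w,\eta)$, where $Q$ is a formal series of iterated commutators in $w$ and $\eta$. Because both $w$ and $\eta$ start at order $\lambda^{-1}$, only finitely many terms of $Q$ contribute at each fixed negative power of $\lambda$, so matching the coefficient of $\lambda^{-n}$ produces
\begin{equation}
w_n + \eta_n = (\log \mathcal{M})_n - Q_n(w_1,\eta_1,\ldots,w_{n-1},\eta_{n-1}),
\end{equation}
whose right-hand side is known by induction. I will apply the projectors $P_\mathfrak{I}$ and $P_\mathfrak{K}$ introduced in the proof of Proposition \ref{prop M sol} to read off $w_n$ and $\eta_n$ uniquely. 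The $\mathfrak{q}$-invariance of $\mathcal{M}$ forces $(\log \mathcal{M})_n \in \mathfrak{g}^{(n \bmod 2)}$, and the grading inclusion \eqref{Cartan dec} ensures that $Q_n$ preserves this parity, so the assignments above are self-consistent order by order.

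Once $W$ and $H$ are in hand, the remaining checks are routine. Each $\eta_n$ lies in $\mbox{Ker ad}_J$, so $e^{\mbox{ad}_\eta}(J) = J$, giving $HJH^{-1} = J$; differentiating this identity in $x$ yields $[h, J] = 0$, whence $h \in \mathfrak{K}((\lambda^{-1}))^{\langle \mathfrak{q} \rangle}$. To derive \eqref{D-S 2}, I will substitute $\mathcal{M} = WH$ into \eqref{D-S 1}, use $[WH, J] = [W, J] H$ (a consequence of $HJH^{-1}=J$), and right-multiply by $H^{-1}$. The $\mathfrak{q}$-invariance of $W$ follows from that of $\mathcal{M}$ and $H$ combined with the fact that $\mathfrak{q}$ is a group automorphism.

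The main delicate point is justifying that $\mathcal{M}$ itself is of the form $\exp(\mu)$ with $\mu \in \mathfrak{g}((\lambda^{-1}))^{\langle \mathfrak{q} \rangle}$, so that the BCH splitting operates at the level of the Lie algebra $\mathfrak{g}$ rather than the ambient matrix algebra $\mathfrak{gl}_{N+2}$. This follows because $\mathcal{M} = \mathbb{1} + O(\lambda^{-1})$ is a formal perturbation of the identity in $LG^{\langle \mathfrak{q} \rangle}$, so $\log \mathcal{M}$ is well defined in the $\lambda^{-1}$-adic topology, and each coefficient inductively lands in $\mathfrak{g}$ because $\mathfrak{g}$ is closed under commutators. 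Non-abelianness of $\mathfrak{K}$ poses no obstruction, since uniqueness of the splitting rests only on the vector-space decomposition \eqref{alg dec}.
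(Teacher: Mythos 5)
Your construction is correct, but it reaches the factorisation by a genuinely different route than the paper. You factorise the already-constructed formal Darboux matrix $\mathcal{M}$ algebraically, writing $W=\exp(w)$, $H=\exp(\eta)$ with $w$ valued in $\mathfrak{I}((\lambda^{-1}))$ and $\eta$ in $\mathfrak{K}((\lambda^{-1}))$, determining the coefficients order by order from the Baker--Campbell--Hausdorff expansion and the projectors $P_{\mathfrak{I}},P_{\mathfrak{K}}$, and only afterwards deriving \eqref{D-S 2} by substituting $\mathcal{M}=WH$ into \eqref{D-S 1}. The paper goes the other way: it postulates the factorised form, substitutes into \eqref{D-S 1} to obtain \eqref{D-S 2} immediately, and then determines $W_i$ and $h_i$ directly from the recursion \eqref{D-S n}, splitting each order along $\mathfrak{g}=\mathfrak{K}\oplus\mathfrak{I}$ (the term $\left[W_{n+1},J\right]$ carries the $\mathfrak{I}$-part, the $h_n$ term the $\mathfrak{K}$-part), and handling non-uniqueness through the gauge freedom $(W,H)\mapsto(WS,S^{-1}H)$. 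Your route buys a clean normal form and uniqueness (once $\log W\in\mathfrak{I}((\lambda^{-1}))$ and $\log H\in\mathfrak{K}((\lambda^{-1}))$ are imposed, the splitting is forced by the direct sum \eqref{alg dec}), and it yields $HJH^{-1}=J$ and $h\in\mathfrak{K}((\lambda^{-1}))$ essentially for free, since $\mathfrak{K}$ is a subalgebra. The paper's route avoids exp/log and BCH altogether and produces the recursion \eqref{D-S n} for $W_i,h_i$ explicitly in terms of $U$, which is exactly what is used later in Proposition \ref{prop Vk} and in the conservation laws. The parity bookkeeping is equivalent: your observation that the BCH brackets and the projectors respect the $\mathbb{Z}_2$-grading (via \eqref{Cartan dec} and the oddness of $J$) plays the role of the paper's requirement $W,H\in LG^{\langle\mathfrak{q}\rangle}$ with $h_i\in\mathfrak{K}^{(i\bmod 2)}$.

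One small repair is needed at the point you yourself flag as delicate. The reason $\log\mathcal{M}$ has coefficients in $\mathfrak{g}$ is not that ``$\mathfrak{g}$ is closed under commutators'': the logarithm series involves matrix products, not only brackets. What you actually need is that $\mathcal{M}\in LG^{\langle\mathfrak{q}\rangle}$, as guaranteed by Proposition \ref{prop M sol}: from $\mathcal{M}^T\mathcal{M}=\mathbb{1}$ and reality of the coefficients one gets $(\log\mathcal{M})^T=\log(\mathcal{M}^{-1})=-\log\mathcal{M}$, so every coefficient is real and skew-symmetric, hence in $\mathfrak{g}$; the $\mathfrak{q}$-invariance of $\mathcal{M}$ then gives $(\log\mathcal{M})_n\in\mathfrak{g}^{(n\bmod 2)}$ as you state. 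With that substitution the argument is complete.
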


\begin{proof}
If the Darboux matrix \eqref{M lambda} is written as the product of $W(\lambda)$ and $H(\lambda)$ which are elements of $LG^{\left< \mathfrak{q} \right>}$ for all $\lambda$, and of the form \eqref{W H asym}, then equation \eqref{D-S 1} takes the form \eqref{D-S 2}, assuming that $H(\lambda)J H(\lambda)^{-1} = J$ for all $\lambda$. It follows from \eqref{W H asym} that 
\be \label{h}
h(\lambda) = H_xH^{-1} = \frac{1}{\lambda}h_1 + \frac{1}{\lambda^2}h_2 + \cdots \,,
\ee
with each $h_i \in \mathfrak{K}^{(i \bmod 2)}$. Comparing powers of $\lambda$ in equation \eqref{D-S 2}, we obtain the following relation for the coefficients of $W(\lambda)$ and $h(\lambda)$ at $\lambda^{-n}$
\be
 \sum_{i=0}^{n}W_i h_{n-i}  + \left[ W_{n+1}, J \right]  - U W_n +  D_x W_n =0 \,, \quad  n=0,1, \ldots \,, \label{D-S n}
 \ee
with $W_0 = \mathbb{1}$ and $h_0 = 0$. Hence, the coefficients  $W_i$ and $h_i$ can be recursively determined from \eqref{D-S n} using the decomposition \eqref{alg dec} and requiring that $W(\lambda)\in LG^{\left< \mathfrak{q} \right>}$. The factorisation \eqref{factor DT} is not unique since we have that the transformation $(W(\lambda),H(\lambda)) \rightarrow (W(\lambda) S(\lambda), S(\lambda)^{-1} H(\lambda))$, with $S(\lambda)= S_0 + \frac{1}{\lambda} S_1 + \cdots$, leaves $\mathcal{M}(\lambda)$ invariant. We can use this gauge freedom to fix the asymptotic behaviour of $W(\lambda)$ and $H(\lambda)$ at $\lambda = \infty$. Finally, any remaining freedom in $W(\lambda)$ is fixed by additionally requiring $[S(\lambda),J]=0$.
\end{proof}

The following proposition determines the Lax matrices $\mathcal{V}_k(\lambda)$ and effectively defines the vmKdV hierarchy.

\begin{prop} \label{prop Vk}
If $\mathcal{V}_k (\lambda)$ is given by
\be \label{Vk}
\mathcal{V}_k (\lambda) =  \left(\lambda^k \mathcal{M}(\lambda) J \mathcal{M}(\lambda)^{-1} \right)_{+} = \left( \lambda^k W(\lambda) J W(\lambda)^{-1} \right)_{+} \,, \quad k=1,3, \ldots \,,
\ee
where $W(\lambda)$ is as in Proposition \ref{prop M=WH}, then the Lax equations \eqref{Lax eq} are local evolutionary PDEs which admit the Lie symmetry
\begin{equation}
(x,t_k,u) \mapsto (\widetilde{x},\widetilde{t}_k,\widetilde{u})=(e^{\epsilon}x,e^{k\epsilon}t_k,e^{-\epsilon}u), \quad k=1,3,...,
\label{scale-k}
\end{equation}
respectively.
\end{prop}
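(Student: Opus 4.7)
The plan is to cast $\mathcal{V}_k$ as the polynomial part of an element $a(\lambda)$ to which Lemma \ref{lem split} applies, then to extract locality from the factorisation $\mathcal{M}=WH$, and finally to deduce the scaling symmetry from the weights in the recursion for $W$. Concretely, I set $a(\lambda):=\lambda^k \mathcal{M}(\lambda) J \mathcal{M}(\lambda)^{-1}$. The intertwining relation \eqref{intertw} gives $\mathcal{L}\,\mathcal{M}=\mathcal{M}\,\mathcal{L}_0$ with $\mathcal{L}_0=D_x-\lambda J$, and a direct computation reduces $[\mathcal{M} J \mathcal{M}^{-1},\mathcal{L}]$ to $\mathcal{M}\,[\mathcal{L}_0,J]\,\mathcal{M}^{-1}=0$; multiplication by the scalar $\lambda^k$ preserves this. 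To check that $a(\lambda)\in\mathfrak{g}((\lambda^{-1}))^{\langle \mathfrak{q}\rangle}$ I would use $\mathfrak{q}(\mathcal{M})=\mathcal{M}$ together with $QJQ^{-1}=-J$ to conclude that $\mathcal{M} J \mathcal{M}^{-1}$ is $\mathfrak{q}$-odd as a series in $\lambda^{-1}$; the factor $\lambda^k$ with $k$ odd compensates this sign and restores $\mathfrak{q}$-invariance. Since $\mathcal{M}(\lambda)=\mathbb{1}+O(\lambda^{-1})$ the coefficient of $\lambda^k$ in $a(\lambda)$ is exactly $J$, so Lemma \ref{lem split} applies and gives $[\mathcal{V}_k,\mathcal{L}]=-D_{t_k}U$ of the form \eqref{Lt}, hence an evolution equation for $\bf u$.

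The equivalence of the two formulae in \eqref{Vk} is then immediate: from $\mathcal{M}=WH$ and $HJH^{-1}=J$ we obtain $\mathcal{M} J \mathcal{M}^{-1}=W J W^{-1}$. This representation is what delivers locality. Proposition \ref{prop M sol} shows that the coefficients $\mathcal{M}_n$ are non-local because the $\mathfrak{K}$-projection of \eqref{Mi} forces one to invert $D_x$. The factorisation \eqref{factor DT} circumvents this: the recursion \eqref{D-S n} for $(W_n,h_n)$ is purely algebraic once split along $\mathfrak{g}=\mathfrak{K}\oplus\mathfrak{I}$, with the $\mathfrak{K}$-part algebraically fixing $h_n$ and the $\mathfrak{I}$-part giving $W_{n+1}$ via $\mbox{ad}_J^{-1}$ applied to differential polynomials in $\bf u$. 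An induction on $n$ then shows that each $W_n$ is a differential polynomial in the components of $\bf u$, so $\mathcal{V}_k(\lambda)$ is a matrix polynomial in $\lambda$ of degree $k$ with local entries, and $D_{t_k}U=[\mathcal{V}_k,\mathcal{L}]$ is a local evolutionary PDE.

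For the scaling symmetry I would extend \eqref{scale-k} by $\lambda\mapsto e^{-\epsilon}\lambda$, which forces $\mathcal{L}(\lambda)\mapsto e^{-\epsilon}\mathcal{L}(\lambda)$ and is consistent with $U\mapsto e^{-\epsilon}U$ and $D_x\mapsto e^{-\epsilon}D_x$. An induction on \eqref{D-S n} shows that $W_n\mapsto e^{-n\epsilon}W_n$: the base case $W_1=-\mbox{ad}_J^{-1}U$ has weight one, and the inductive step uses that each factor of $U$ or $D_x$ in \eqref{D-S n} supplies an extra $e^{-\epsilon}$ while $\mbox{ad}_J^{-1}$ is weight-preserving. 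Consequently $W(\lambda)$ is invariant, so $\mathcal{V}_k(\lambda)\mapsto e^{-k\epsilon}\mathcal{V}_k(\lambda)$ and both sides of the Lax equation $D_{t_k}U=[\mathcal{V}_k,\mathcal{L}]$ scale by the common factor $e^{-(k+1)\epsilon}$, establishing \eqref{scale-k}.

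The main subtlety I expect is the locality step. The recursion \eqref{D-S n} is algebraic, but the factorisation \eqref{factor DT} carries the residual gauge freedom $(W,H)\mapsto(WS,S^{-1}H)$ noted in Proposition \ref{prop M=WH}, and one must pin down $S(\lambda)$ so that no integration is reintroduced; imposing $[S(\lambda),J]=0$ as suggested there keeps every $W_n$ manifestly local. Once this is done, all non-local data is captured by $H(\lambda)$, which drops out of $\lambda^k W J W^{-1}$ precisely because $HJH^{-1}=J$, leaving a local $\mathcal{V}_k(\lambda)$ as required.
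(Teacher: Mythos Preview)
Your proposal is correct and follows essentially the same route as the paper: cast $a(\lambda)=\lambda^k\mathcal{M} J\mathcal{M}^{-1}$ as a solution of $[a,\mathcal{L}]=0$ so that Lemma~\ref{lem split} applies, pass to $W$ via $HJH^{-1}=J$, and read off the scaling symmetry from the weight structure of the recursion \eqref{D-S n}. The paper's version is terser: it takes $C(\lambda)=\lambda^k J$ in \eqref{sol Lax 1}, writes $U_{t_k}=[J,a_{-1}]$ with $a_{-1}=\sum_{i+j=k+1}W_iJW_j^T$, and asserts the scaling invariance of \eqref{D-S n} directly as $(x,U,W_n,h_n)\mapsto(e^{\epsilon}x,e^{-\epsilon}U,e^{-n\epsilon}W_n,e^{-(n+1)\epsilon}h_n)$ rather than introducing $\lambda\mapsto e^{-\epsilon}\lambda$; your extension to $\lambda$ is an equivalent and arguably cleaner packaging of the same homogeneity count. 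You are also more explicit than the paper about locality (the paper simply defers to Proposition~\ref{prop M=WH}), and your observation that the residual gauge $[S(\lambda),J]=0$ must be fixed locally to keep each $W_n$ a differential polynomial is exactly the point the paper leaves implicit.
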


\begin{proof}
The members of the vmKdV hierarchy are given by the Lax equations \eqref{Lax eq}, which can also be written as $-U_{t_k}= [  \mathcal{V}_k, \mathcal{L}  ]$. From Lemma \ref{lem split} we have that $\mathcal{V}_k(\lambda) = a(\lambda)_{+}=(P(\lambda)C(\lambda)P(\lambda)^{-1})_{+}$ with $a(\lambda) \sim \lambda^k J$  as $\lambda \rightarrow \infty$, and from \eqref{sol Lax 1} we obtain that $C(\lambda)\sim \lambda^k J$. If we choose $C(\lambda) = \lambda^k J$, then, following Lemma \ref{lem split}, we can write the Lax equation in the form $U_{t_{k}} =  \left[ J, a_{-1} \right]$, with 
$$a_{-1} = \text{Res}(a(\lambda)) = \sum_{i+j=k+1} W_i J W_{j}^{T}\,,$$
where $W_i$ are calculated according to Proposition \ref{prop M=WH}. Equation \eqref{D-S n} is invariant under the Lie group
$$
(x, U, W_n, h_n ) \mapsto (\widetilde{x},\widetilde{U}, \widetilde{W}_n, \widetilde{h}_n ) = (e^{\epsilon}x, e^{-\epsilon} U, e^{-n \epsilon} W_n, e^{-(n+1) \epsilon}h_n)\,.
$$
Since $W_n$ is homogeneous of degree $n$, it follows that $a_{-1}$ is homogeneous of degree $k+1$. Therefore the term $U_{t_k}$ is of the same degree. This implies that the expression  $U_{t_{k}} -  \left[ J, a_{-1} \right] $ is homogeneous of degree $k+1$, hence each Lax equation \eqref{Lax eq} admits the symmetry \eqref{scale-k}.
\end{proof}

\begin{example}
For $k=1$ we find 
$$
\mathcal{V}_1(\lambda) = \left( \lambda W(\lambda) J W(\lambda)^T  \right)_{+} = \lambda J + \left( J W_1^T + W_1 J \right) \,.$$
Solving equations \eqref{D-S n} recursively we find $W_1 = \mbox{ad}_J (U)$, hence we have that $\mathcal{V}_1(\lambda) =  \lambda J + U = \mathcal{U}(\lambda) $. Therefore, the Lax equation \eqref{Lax eq} provides the first equation of the hierarchy, ${\bf u}_{t_1} = {\bf u}_x$. In a similar manner, the case $k=3$ provides the next member of the hierarchy. In particular, we obtain the following matrix 
\be \label{V3}
\mathcal{V}_3 = \begin{pmatrix}
0& \lambda^3 - \lambda \frac{ \| {\bf u}\| ^2}{2}   & -\lambda {\bf u}_{x}^T \\
- \lambda^3 + \lambda \frac{ \| {\bf u}\| ^2}{2} & 0 & \lambda^2 {\bf u}^T - \frac{ \| {\bf u}\| ^2}{2} {\bf u}^T - {\bf u}^T_{xx}   \\
 \lambda {\bf u}_{x} &  -\lambda^2 {\bf u}^T + \frac{ \| {\bf u}\| ^2}{2} {\bf u}^T + {\bf u}^T_{xx}   & {\bf u} {\bf u}_{x}^{T} - {\bf u}_{x} {\bf u}^{T}  \\
\end{pmatrix} ,
\ee
such that the compatibility of $\mathcal{A}_3(\lambda) = D_{t_3} - \mathcal{V}_3(\lambda)$  and $\mathcal{L}(\lambda)$ is equivalent to the vmKdV equation \eqref{vmkdv}.
\end{example}

The flows defined by the Lax equations \eqref{Lax eq} commute, namely $[ D_{t_n}, D_{t_m}] \mathbf{u} = {\bf 0}$, for  $n, m = 1, 3, \ldots$.  We prove this below following the ideas in \cite{DS}, also taking into account the reduction group generated by \eqref{t-aut}-\eqref{r-aut}.

\begin{lemma} \label{lem As comm}
The Lax operators $\mathcal{A}_{k}$, $k= 1,3, \ldots$, commute.
\end{lemma}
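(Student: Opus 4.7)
The plan is to reduce the operator identity $[\mathcal{A}_n,\mathcal{A}_m]=0$ to the zero-curvature relation
\[
D_{t_m}\mathcal{V}_n-D_{t_n}\mathcal{V}_m+[\mathcal{V}_n,\mathcal{V}_m]=0,
\]
which is what remains after expanding $[\mathcal{A}_n,\mathcal{A}_m]\psi$ for arbitrary $\psi$ and using that $D_{t_n}$ and $D_{t_m}$ commute as partial derivatives. To handle this quantity I would work with the unprojected currents
$$\widetilde{\mathcal{W}}_k(\lambda):=\lambda^{k}\mathcal{M}(\lambda)\,J\,\mathcal{M}(\lambda)^{-1},$$
so that $\mathcal{V}_k=(\widetilde{\mathcal{W}}_k)_+$ by Proposition \ref{prop Vk}. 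Two basic facts drop out immediately: $[\widetilde{\mathcal{W}}_n,\widetilde{\mathcal{W}}_m]=0$, because the inner factors $\mathcal{M}^{-1}\mathcal{M}$ cancel and $[J,J]=0$; and splitting $\widetilde{\mathcal{W}}_k=\mathcal{V}_k+(\widetilde{\mathcal{W}}_k)_{-}$ in this identity and projecting on the polynomial part, one obtains
\[
[\mathcal{V}_n,\mathcal{V}_m]=\bigl([\mathcal{V}_m,(\widetilde{\mathcal{W}}_n)_-]\bigr)_+-\bigl([\mathcal{V}_n,(\widetilde{\mathcal{W}}_m)_-]\bigr)_+,
\]
since the cross term $[(\widetilde{\mathcal{W}}_n)_-,(\widetilde{\mathcal{W}}_m)_-]$ contains only strictly negative powers of $\lambda$.

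The second step is to extend the formal dressing matrix $\mathcal{M}(\lambda)$ to all times of the hierarchy by declaring
\[
D_{t_k}\mathcal{M}\;=\;\mathcal{V}_k\,\mathcal{M}-\lambda^{k}\mathcal{M}\,J\;=\;-(\widetilde{\mathcal{W}}_k)_-\,\mathcal{M}.
\]
The second expression makes transparent that this evolution respects the asymptotics $\mathcal{M}=\mathbb{1}+O(\lambda^{-1})$ and the loop-group membership $\mathcal{M}\in LG^{\langle\mathfrak{q}\rangle}$, since $(\widetilde{\mathcal{W}}_k)_-$ lies in the loop algebra $\mathfrak{g}((\lambda^{-1}))^{\langle\mathfrak{q}\rangle}$ and is skew-symmetric. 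Differentiating $\widetilde{\mathcal{W}}_m$ in $t_n$ and using $[\widetilde{\mathcal{W}}_n,\widetilde{\mathcal{W}}_m]=0$ gives
\[
D_{t_n}\widetilde{\mathcal{W}}_m\;=\;[\mathcal{V}_n-\widetilde{\mathcal{W}}_n,\widetilde{\mathcal{W}}_m]\;=\;[\mathcal{V}_n,\widetilde{\mathcal{W}}_m],
\]
and taking the polynomial part yields $D_{t_n}\mathcal{V}_m=[\mathcal{V}_n,\mathcal{V}_m]+\bigl([\mathcal{V}_n,(\widetilde{\mathcal{W}}_m)_-]\bigr)_+$, together with its $n\leftrightarrow m$ analogue.

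Substituting these two expressions and the identity for $[\mathcal{V}_n,\mathcal{V}_m]$ from the first step into the zero-curvature quantity, the six resulting terms cancel in pairs, establishing the desired relation. The principal obstacle is not the algebraic cancellation but the definition of the $t_k$-extension of $\mathcal{M}$: one has to check that $D_{t_k}\mathcal{M}=-(\widetilde{\mathcal{W}}_k)_-\mathcal{M}$ is compatible with the reduction group generated by $\mathfrak{t},\mathfrak{q},\mathfrak{r}$ and induces on the potential $U$ exactly the flow already fixed by the Lax equation \eqref{Lax eq}, so that no consistency conflict arises between the two independently defined time evolutions. Once this is verified the remainder is a purely algebraic manipulation of $(+)/(-)$ projections.
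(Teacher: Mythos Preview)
Your argument is correct and is essentially the classical Adler--Kostant--Symes/$r$-matrix proof of commutativity of flows in a splitting Lie algebra: you use only that $\widetilde{\mathcal{W}}_k=\lambda^k\mathcal{M}J\mathcal{M}^{-1}$ all commute, the $(+)/(-)$ decomposition in $\lambda$, and a consistent choice of $t_k$-evolution for $\mathcal{M}$. The paper takes a different route. Rather than working with the unprojected currents and their projections, it conjugates the full operators by $\mathcal{M}$, writing $\mathcal{M}^{-1}\mathcal{A}_k\mathcal{M}=D_{t_k}-\lambda^kJ-\hat{\mathcal{V}}_k$ with $\hat{\mathcal{V}}_k\in\lambda^{-1}\mathfrak{g}[[\lambda^{-1}]]$, and then uses the relation $[\mathcal{M}^{-1}\mathcal{A}_k\mathcal{M},\mathcal{L}_0]=0$ together with the $\mathfrak{K}\oplus\mathfrak{I}$ decomposition to show that each $\hat{\mathcal{V}}_k$ is $x$-independent and lies in $\mathfrak{K}=\ker\mathrm{ad}_J$. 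That last fact is what kills the cross terms $[\lambda^nJ,\hat{\mathcal{V}}_m]$ in the conjugated commutator, leaving an expression in strictly negative powers whose $(+)$-projection after conjugating back by $\mathcal{M}$ vanishes.

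The practical difference is this: your proof is more general and never touches the $\mathfrak{K}/\mathfrak{I}$ structure specific to this non-regular $J$, so it would go through verbatim for any Drinfel'd--Sokolov-type hierarchy. The paper's proof, by contrast, extracts along the way the structural fact $\hat{\mathcal{V}}_k\in\mathfrak{K}$, which it reuses immediately afterwards in the proof of the conservation laws (Proposition~\ref{prop claws}). One minor point: you correctly flag that the declaration $D_{t_k}\mathcal{M}=-(\widetilde{\mathcal{W}}_k)_-\mathcal{M}$ must be shown compatible with the $x$-equation \eqref{D-S 1} and with the Lax flow on $U$; this verification is routine (differentiate \eqref{intertw} in $t_k$ and use Lemma~\ref{lem split}), but since you leave it as ``once this is verified'' the argument is, strictly speaking, a sketch at that point.
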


\begin{proof}
The Lax equation  $[ \mathcal{A}_k, \mathcal{L} ]=0$, in view of  relation \eqref{intertw}, takes the form 
$$[ \mathcal{M}^{-1} \mathcal{A}_k \mathcal{M}, \mathcal{L}_0 ]=0\,, \quad \text{with} \quad  \mathcal{M}^{-1} \mathcal{A}_k \mathcal{M} = D_{t_k} -\mathcal{M}^{-1} \mathcal{V}_k \mathcal{M} + \mathcal{M}^{-1} \mathcal{M}_{t_k} \,.
$$
Since $\mathcal{V}_k(\lambda) = (\lambda^k \mathcal{M}(\lambda) J \mathcal{M}(\lambda)^{-1})_{+}$, it follows that $\mathcal{M}^{-1} \mathcal{A}_k \mathcal{M}= D_{t_k} - \lambda^k J - \hat{\mathcal{V}}_{k}(\lambda)$, with $\hat{\mathcal{V}}_k(\lambda) = - \mathcal{M}^{-1} \mathcal{M}_{t_k} - \mathcal{M}^{-1} (\lambda^k \mathcal{M} J \mathcal{M}^{-1})_{-} \mathcal{M}  \in \mathfrak{g}((\lambda^{-1}))^{ \langle \mathfrak{q} \rangle }$ and of the form 
$$
\hat{\mathcal{V}}_k(\lambda) = \frac{1}{\lambda}\hat{V}_1 + \frac{1}{\lambda^2}\hat{V}_2 + \cdots \,.
$$
The commutativity relation $\left[ \mathcal{M}^{-1} \mathcal{A}_k \mathcal{M} , \mathcal{L}_0 \right] =0$ is equivalent to $D_x \hat{\mathcal{V}}_k = \lambda [J, \hat{\mathcal{V}}_k]$. Comparing powers of $\lambda$ we obtain 
$$
\mbox{ad}_J\hat{V}_1 = 0\,, \quad D_x \hat{V}_i = \mbox{ad}_J \hat{V}_{i+1} \,, \quad i = 1,2,\ldots \,.
$$
Recursively we find that $D_x \hat{V}_i \in \mathfrak{K}^{(i\bmod{2})}$ while $\mbox{ad}_J \hat{V}_{i+1} \in \mathfrak{I}^{(i\bmod{2})}$, therefore $D_x \hat{\mathcal{V}}_k (\lambda) = 0$ and $\hat{\mathcal{V}}_k(\lambda) \in \mathfrak{K}$. We have that $[ \mathcal{A}_n, \mathcal{A}_m ]= [ \mathcal{A}_n, \mathcal{A}_m ]_{+}$, as the Lax operators $\mathcal{A}_n , \, \mathcal{A}_m$ are polynomial in $\lambda$, hence
$$
[ \mathcal{A}_n, \mathcal{A}_m ]_{+} = \left( \mathcal{M} ( D_{t_m} \hat{\mathcal{V}}_n - D_{t_n} \hat{\mathcal{V}}_m + [\hat{\mathcal{V}}_n, \hat{\mathcal{V}}_m ] ) \mathcal{M}^{-1}\right)_{+} = 0 \,.
$$
\end{proof}
The commutativity of the flows $[D_{t_n}, D_{t_m}] {\bf u} = {\bf 0}$ follows from the Lax equations \eqref{Lax eq} and the Jacobi identity. Indeed, we have $[D_{t_n}, D_{t_m}] \mathcal{L} = [ [\mathcal{A}_n, \mathcal{A}_m ], \mathcal{L} ] = 0$ from the above lemma.

\subsection{The recursion operator}

In this section we construct the recursion operator for the vmKdV hierarchy \eqref{Lax eq}. This recursion operator was first derived in \cite{JP vmkdv} and later in \cite{Anco vmkdv} using the bi-Hamiltonian formalism. Here we present an alternative construction, following \cite{Gurses rec}, using the Lax matrices of the hierarchy \eqref{Vk} and its reduction group \eqref{t-aut}-\eqref{r-aut}.

\begin{prop}
The recursion operator $\mathcal{R}$ for the vmKdV hierarchy is given by 
\be \label{recursion op}
\mathcal{R} {\bf f} = - D_x^2 {\bf f} - \| {\bf u} \|^ 2 {\bf f} - {\bf u}_x D_x^{-1} \left( {\bf u}^T {\bf f} \right) -  D_x^{-1} ({\bf u}_x \wedge {\bf f}) {\bf u} \,,
\ee
where ${\bf a} \wedge {\bf b} = {\bf a}{\bf b}^T - {\bf b}{\bf a}^T$.
\end{prop}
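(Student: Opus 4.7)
The plan is to follow the approach of \cite{Gurses rec}, exploiting the shift structure $\lambda^{k+2}WJW^{-1} = \lambda^2 \cdot (\lambda^k WJW^{-1})$. Projecting onto the polynomial-in-$\lambda$ part gives the algebraic identity
\be
\mathcal{V}_{k+2}(\lambda) = \lambda^2 \mathcal{V}_k(\lambda) + \lambda X_1 + X_0,
\ee
where $X_1 = R_{k,-1}$ and $X_0 = R_{k,-2}$ are the first two coefficients of $(\lambda^k WJW^{-1})_-$. Substituting this into the Lax equation $[\mathcal{V}_{k+2}, \mathcal{L}] = -U_{t_{k+2}}$ and expanding in powers of $\lambda$, the $\lambda^2$-coefficient recovers the identity $[J, X_1] = U_{t_k}$ already granted by Lemma \ref{lem split}, the $\lambda^1$-coefficient gives $[J, X_0] = D_x X_1 - [U, X_1]$ (which is \eqref{a i} at $i = -1$), and the $\lambda^0$-coefficient yields
\be
U_{t_{k+2}} = D_x X_0 - [U, X_0].
\ee
The problem thus reduces to expressing $X_0$ in closed form in terms of ${\bf f} := {\bf u}_{t_k}$.

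I would then decompose $X_1$ and $X_0$ using both the $\mathbb{Z}_2$-grading and the splitting $\mathfrak{g} = \mathfrak{K}\oplus\mathfrak{I}$. Since $X_1 \in \mathfrak{g}^{(1)}$, its image-part is uniquely fixed by $[J, X_1] = U_{t_k}$ (the subspace $\mathfrak{I}^{(1)}$ is naturally isomorphic to $\mathbb{R}^N$ and the associated vector is ${\bf f}$), while its kernel-part is a scalar multiple $\beta J$ of $J$ (as $\mathfrak{K}^{(1)} = \mathbb{R} J$). Analogously $X_0 \in \mathfrak{g}^{(0)}$ splits into an $\mathfrak{I}^{(0)}$-part of $U$-form with a vector ${\bf h} \in \mathbb{R}^N$, plus an $\mathfrak{so}_N$-valued $\mathfrak{K}^{(0)}$-part $\Psi$; the non-local ingredients $\beta$ and $\Psi$ will be determined by consistency.

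A brief matrix calculation using the block form \eqref{J U} of $U$ shows that the $\mathfrak{K}^{(1)}$-component of $D_x X_1 - [U, X_1]$ equals $(\beta_x + {\bf u}^T{\bf f}) J$. Since $[J, X_0] \in \mathfrak{I}$, this component must vanish, forcing $\beta = -D_x^{-1}({\bf u}^T{\bf f})$, and then ${\bf h}$ is determined from the $\mathfrak{I}^{(1)}$-part as a local expression in ${\bf f}_x$ and $\beta{\bf u}$. An analogous consistency condition one level down --- that $D_x X_0 - [U, X_0]$ be of $U$-form, so as to match the shape \eqref{Lt} of $U_{t_{k+2}}$ --- pins down the remaining freedom as $\Psi_x = {\bf u}\wedge{\bf f}_x$, after one computes the commutators $[U, U({\bf h})]$ (which lands in $\mathfrak{so}_N$) and $[U, \Psi]$ (which lands back in the $U$-form subspace). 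Extracting the vector part of $U_{t_{k+2}} = D_x X_0 - [U, X_0]$ then gives the preliminary expression
\be
{\bf u}_{t_{k+2}} = -{\bf f}_{xx} - ({\bf u}^T{\bf f}){\bf u} - {\bf u}_x D_x^{-1}({\bf u}^T{\bf f}) + D_x^{-1}({\bf u}\wedge{\bf f}_x)\,{\bf u},
\ee
which I would recast into the stated form \eqref{recursion op} via the integration-by-parts identity $D_x^{-1}({\bf u}\wedge{\bf f}_x) = {\bf u}\wedge{\bf f} - D_x^{-1}({\bf u}_x\wedge{\bf f})$ together with the elementary relation $({\bf u}\wedge{\bf f}){\bf u} = ({\bf u}^T{\bf f}){\bf u} - \|{\bf u}\|^2{\bf f}$. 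The main technical obstacle is the careful bookkeeping of the grading projections and of the two distinct non-local integrals that emerge, a feature directly tied to the non-abelian nature of $\mathfrak{K} = \mathbb{R} J \oplus \mathfrak{so}_N$ for $N\geq 3$.
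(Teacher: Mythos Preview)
Your proposal is correct and follows essentially the same approach as the paper: both use the G\"urses--Karasu--Sokolov shift $\mathcal{V}_{k+2}=\lambda^2\mathcal{V}_k+\lambda A+B$, read off the three relations $[J,A]=U_{t_k}$, $D_xA=[U,A]+[J,B]$, $U_{t_{k+2}}=D_xB-[U,B]$ from the Lax equation, and then determine the kernel components (your $\beta J$ and $\Psi$) via the projections $P_{\mathfrak{K}}$, $P_{\mathfrak{I}}$. Your write-up is slightly more explicit in spelling out the final integration-by-parts identity that converts $D_x^{-1}({\bf u}\wedge{\bf f}_x)$ into the form involving $D_x^{-1}({\bf u}_x\wedge{\bf f})$, which the paper leaves implicit.
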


\begin{proof}
We split \eqref{Vk} in polynomial and purely negative powers in $\lambda$  as follows
$$\mathcal{V}_{2n +1}(\lambda) = \left( \lambda^2 \lambda^{2n-1} WJW^T \right)_{+} =  \left(  \lambda^2 \mathcal{V}_{2n-1}(\lambda) \right)_{+}   + \left(  \lambda^2 \left( \lambda^{2n-1} W J W^{T} \right)_{-}  \right)_{+}.
$$ 
Hence we can write the following recursive expression for the Lax matrices $\mathcal{V}_{k}(\lambda)$, for $k=1,3, \ldots$,
\be \label{V recur} 
\mathcal{V}_{2 n+1}(\lambda) = \lambda^2 \mathcal{V}_{2 n-1}(\lambda) + \lambda A_{2n-1} + B_{2n-1}\,.
\ee
Since $\mathcal{V}_{2n+1}(\lambda) \,, \mathcal{V}_{2n-1}(\lambda) \in \mathfrak{g}[\lambda]^{\langle \mathfrak{q} \rangle}$  and $\lambda^2$ is invariant under $\lambda \mapsto -\lambda$, it follows that $ A_{2n-1} \in  \mathfrak{g}^{(1)}$ and $B_{2n-1} \in \mathfrak{g}^{(0)}$.  Then from the Lax equations \eqref{Lax eq} we have
$$
\mathcal{L}_{t_{2n+1}} = \lambda^2 \mathcal{L}_{t_{2n-1}} + \lambda [A_{2n-1}  , \mathcal{L}] + [ B_{2n-1}, \mathcal{L} ] \,,
$$ 
from which, comparing powers of $\lambda$, we obtain
\begin{eqnarray}
U_{t_{2n+1}} &=& D_x B_{2n-1} + [B_{2n-1},U] \,, \nonumber \\
D_x A_{2n-1} &=& [U, A_{2n-1}] + [J, B_{2n-1}] \,, \label{A B rec}  \\
U_{t_{2n-1}} &=& [J, A_{2n-1}]  \,. \nonumber
\end{eqnarray} 
The above equations provide the relation between two members of the vmKdV hierarchy, while fixing $A_{2n-1}, B_{2n-1}$ in terms of ${\bf u}$ and ${\bf u}_{t_{2n-1}}$. In particular, we have that
$$
{\bf u}_{t_{2n+1}} = - D_x^2\, {\bf u}_{t_{2n-1}} - \| {\bf u} \|^2 {\bf u}_{t_{2n-1}} - {\bf u}_x D_x^{-1} \left( {\bf u}^T {\bf u}_{t_{2n-1}}   \right)  -  D_x^{-1} ({\bf u}_x \wedge {\bf u}_{t_{2n-1}}) {\bf u} \,.
$$
Hence, we can write ${\bf u}_{t_{2n+1}} = \mathcal{R}  {\bf u}_{t_{2n-1}}$, with $\mathcal{R}$ given in \eqref{recursion op}.
\end{proof}

\begin{example}
Acting with the recursion operator \eqref{recursion op} on the first equation of the hierarchy, ${\bf u}_{t_1} = {\bf u}_x$, we obtain the vmKdV equation
$$
{\bf u}_{t_3} = \mathcal{R} {\bf u}_{t_1} = \mathcal{R} {\bf u}_{x} = - {\bf u}_{xxx} - \fract{3}{2} \| {\bf u} \|^2 {\bf u}_x \,.
$$
Accordingly, we find the next equation of the hierarchy to be
\begin{equation}
{\bf u}_{t_5} =  {\bf u}_{xxxxx}  + \fract{5}{2} \left(  {\bf u}^T{\bf u} \right) {\bf u}_{xxx} + \fract{5}{2} \left(  {\bf u}_x^T{\bf u}_x \right) {\bf u}_{x} + 5 \left(  {\bf u}^T{\bf u}_x \right) {\bf u}_{xx} + 5 \left(  {\bf u}^T{\bf u}_{xx} \right) {\bf u}_{x}  + \fract{15}{8} \left(  {\bf u}^T{\bf u} \right)^2 {\bf u}_{x} \,. \nonumber
\end{equation}
\end{example}

From equation \eqref{V recur} we obtain a recursion relation for the Lax matrices. Using projections $P_{\mathfrak{K}}$ and $P_{\mathfrak{I}}$, introduced in Proposition \ref{prop M sol},  in equations \eqref{A B rec} we find
\begin{eqnarray}
A_{2n-1} &=& - \mbox{ad}_J (U_{t_{2n-1}}) - D_x^{-1}({\bf u}^T {\bf u}_{t_{2n-1}}) J \,, \nonumber \\
B_{2n-1} &=& -D_x U_{t_{2n-1}} - D_x^{-1}({\bf u}^T {\bf u}_{t_{2n-1}}) U + D_x^{-1} \left[ D_x U_{t_{2n-1}}, U \right]  \,.\nonumber
\end{eqnarray}
Hence, relation \eqref{V recur} can be written in matrix form as follows for $n=1,2,\ldots$
\begin{multline}
\mathcal{V}_{2n+1}(\lambda)  = \lambda^2 \mathcal{V}_{2n-1}(\lambda) + \lambda 
\begin{pmatrix}
0 & - D_x^{-1} ({\bf u}^T {\bf u}_{t_{2n-1}}) & - {\bf u}^T_{t_{2n-1}} \\
D_x^{-1} ({\bf u}^T {\bf u}_{t_{2n-1}}) & 0 & {\bf 0}^T \\
{\bf u}_{t_{2n-1}} & {\bf 0} & \bb{0} 
\end{pmatrix} \\
+ \begin{pmatrix}
0 & 0 & {\bf 0}^T \\
0 & 0 & -D_x ({\bf u}_{t_{2n-1}}^T) - D_x^{-1}({\bf u}^T_{t_{2n-1}} {\bf u} ) {\bf u}^T \\
{\bf 0} & D_x ({\bf u}_{t_{2n-1}}) + D_x^{-1}({\bf u}^T{\bf u}_{t_{2n-1}}  ) {\bf u}  & D_x^{-1} ( {\bf u} \wedge D_x({\bf u}_{t_{2n-1}})) 
\end{pmatrix}, \nonumber
\end{multline}
with $\mathcal{V}_1(\lambda)=\lambda J+U$ and $\textbf{u}_{t_1}=\textbf{u}_{x}$.

\subsection{Conservation laws}

The vmKdV hierarchy admits an infinite number of conservation laws. In the present section we construct a generating function for conservation laws using the Lax representation of the hierarchy as presented in the previous section. The conservation laws for the scalar mKdV equation first appeared in \cite{mKdV claws}, while the first few conserved densities and corresponding fluxes related to the vmKdV equation first appeared in \cite{miky titi}. We also point the reader to \cite{Anco sol}, where the complex mKdV and Sasa-Satsuma equations are treated.

\begin{prop} \label{prop claws}
The vmKdV hierarchy admits an infinite number of scalar conservation laws, with the $(1,2)$ element of matrix $h(\lambda)$ in \eqref{h} being a generating function of the corresponding densities.
\end{prop}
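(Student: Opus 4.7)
The plan is to use the factorisation $\mathcal{M}(\lambda)=W(\lambda)H(\lambda)$ of Proposition \ref{prop M=WH} and read off conservation laws from the matrix $h(\lambda)=H_xH^{-1}$ once its $t_k$-evolution has been characterised. To that end, introduce the $t_k$-analogue $g_k(\lambda):=H_{t_k}H^{-1}$. Differentiating the identity $H J H^{-1}=J$ with respect to $t_k$ gives $[g_k,J]=0$, so $g_k\in \mathfrak{K}((\lambda^{-1}))^{\langle\mathfrak{q}\rangle}$, and its coefficients are computed recursively by combining the dressing identity $\mathcal{M}_{t_k}=\mathcal{V}_k\mathcal{M}-\lambda^k\mathcal{M} J$ (a consequence of the Lax equation together with $\mathcal{V}_k=(\lambda^k\mathcal{M} J\mathcal{M}^{-1})_+$) with the factorisation $\mathcal{M}=WH$ and the fact that $HJ=JH$.

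The central identity comes from the equality of mixed partials $H_{xt_k}=H_{t_k x}$, which after a brief manipulation reads
\begin{equation}
h_{t_k}-(g_k)_x=[g_k,h]\,.
\end{equation}
What remains is to show that the $(1,2)$ entry of the right-hand side vanishes. Writing $h=\sum_{n\ge 1}\lambda^{-n}h_n$ and $g_k=\sum_{n\ge 1}\lambda^{-n}g_{k,n}$, the $\mathfrak{q}$-invariance of $H(\lambda)$ forces $h_n,g_{k,n}\in \mathfrak{K}^{(n\bmod 2)}$. By the commutation relations \eqref{Ker dec}, both $[\mathfrak{K}^{(0)},\mathfrak{K}^{(1)}]$ and $[\mathfrak{K}^{(1)},\mathfrak{K}^{(1)}]$ vanish, so every term in the expansion of $[g_k,h]$ lies in $\mathfrak{K}^{(0)}\subseteq \mathfrak{g}^{(0)}$; from the block form \eqref{eq:even-odd}, any element of $\mathfrak{g}^{(0)}$ has zero $(1,2)$ entry, hence $\bigl([g_k,h]\bigr)_{12}=0$.

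Taking the $(1,2)$ component of the central identity therefore produces the scalar conservation law
\begin{equation}
\partial_{t_k}\bigl(h(\lambda)\bigr)_{12}=\partial_x\bigl(g_k(\lambda)\bigr)_{12}\,,
\end{equation}
which, on matching powers of $\lambda^{-1}$, splits into the infinite family $\partial_{t_k}(h_n)_{12}=\partial_x(g_{k,n})_{12}$. The even-indexed coefficients $h_{2m}$ contribute trivially since they already sit in $\mathfrak{g}^{(0)}$, leaving a genuine infinite family of nontrivial densities $(h_{2m+1})_{12}$. The main obstacle is the parity assignment $h_n,g_{k,n}\in\mathfrak{K}^{(n\bmod 2)}$: once this is established from $\mathfrak{q}$-invariance together with the centraliser constraint $HJH^{-1}=J$, the remainder of the argument is a clean calculation using the $\mathbb{Z}_2$-grading. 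I would close by computing $h_1$ and $h_3$ explicitly from the recursion underlying Proposition \ref{prop M=WH}, exhibiting the first two densities and verifying that in the scalar reduction $N=1$ they recover the classical mKdV densities.
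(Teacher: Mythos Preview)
Your argument is correct and arrives at the same conservation-law identity as the paper, but via a more direct route. The paper conjugates both Lax operators by $W$, obtaining $W^{-1}\mathcal{A}_kW=D_{t_k}-\lambda^kJ-\widetilde{\mathcal{V}}_k$ and $H\mathcal{L}_0H^{-1}=D_x-\lambda J-h$, and then invokes Lemma~\ref{lem As comm} (the fact that $\hat{\mathcal{V}}_k\in\mathfrak{K}$ is $x$-independent) to conclude $\widetilde{\mathcal{V}}_k=H\hat{\mathcal{V}}_kH^{-1}+H_{t_k}H^{-1}\in\mathfrak{K}$; the commutator of the two dressed operators then yields \eqref{claws rel}. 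You bypass the $W$-conjugation and Lemma~\ref{lem As comm} entirely: working solely with $H$, you get $g_k=H_{t_k}H^{-1}\in\mathfrak{K}$ in one line from differentiating $HJH^{-1}=J$, and the zero-curvature relation $h_{t_k}-(g_k)_x=[g_k,h]$ drops out of the equality of mixed partials. Since $\widetilde{\mathcal{V}}_k-g_k=H\hat{\mathcal{V}}_kH^{-1}$ satisfies $D_x(H\hat{\mathcal{V}}_kH^{-1})=[h,H\hat{\mathcal{V}}_kH^{-1}]$, the two identities are in fact equivalent, and your flux $(g_k)_{12}$ coincides with the paper's $(\widetilde{\mathcal{V}}_k)_{12}$ up to this harmless discrepancy. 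Your route is cleaner because it isolates exactly the piece of $H$ that matters and avoids the auxiliary lemma; the paper's route has the advantage that $\widetilde{\mathcal{V}}_k$ is tied explicitly to the Lax matrix $\mathcal{V}_k$, making the locality of the fluxes in ${\bf u}$ more transparent.

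One small caveat: the ``dressing identity'' $\mathcal{M}_{t_k}=\mathcal{V}_k\mathcal{M}-\lambda^k\mathcal{M} J$ that you cite parenthetically is not quite right as stated---from Lemma~\ref{lem As comm} one has $\mathcal{M}_{t_k}=\mathcal{V}_k\mathcal{M}-\mathcal{M}(\lambda^kJ+\hat{\mathcal{V}}_k)$ with $\hat{\mathcal{V}}_k$ an $x$-independent element of $\mathfrak{K}$, not necessarily zero. This does not affect your main argument, which nowhere uses the explicit form of $g_k$; it only enters if you want to compute the fluxes concretely.
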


\begin{proof}
Given the factorisation \eqref{factor DT} of the Darboux matrix $\mathcal{M}(\lambda)$, the Lax equations $[\mathcal{A}_k, \mathcal{L}]=0$ can be written as
$[W^{-1} \mathcal{A}_k W, H \mathcal{L}_0 H^{-1}] = 0$, where $H \mathcal{L}_0 H^{-1} = D_x - \lambda J - h(\lambda)$ and $W^{-1} \mathcal{A}_k W =  D_{t_k} - \lambda^k J - \widetilde{\mathcal{V}}_{k}(\lambda)$, with $\widetilde{\mathcal{V}}_k(\lambda)$  $\in \mathfrak{g}((\lambda^{-1}))^{ \langle \mathfrak{q} \rangle }$ of the form
$$
\widetilde{\mathcal{V}}_k(\lambda)= \frac{\widetilde{V}_{k,1}}{\lambda} + \frac{\widetilde{V}_{k, 2}}{\lambda^2} + \cdots \,.
$$
We have that $\widetilde{\mathcal{V}}_k = H\hat{\mathcal{V}}_k H^{-1} + H_{t_k} H^{-1}$, with $\hat{\mathcal{V}}_k(\lambda) \in \mathfrak{K}$ as in Lemma \ref{lem As comm}  and $H$ as in Proposition \ref{prop M=WH}. Therefore,  $\widetilde{\mathcal{V}}_k(\lambda) \in \mathfrak{K}$, and thus $[W^{-1} \mathcal{A}_k W, H \mathcal{L}_0 H^{-1}] = 0$ takes the form
\be \label{claws rel}
D_{t_k} h(\lambda) - D_{x} \widetilde{\mathcal{V}}_k(\lambda) + [h(\lambda), \widetilde{\mathcal{V}}_k(\lambda)] = 0 \,.
\ee
Given that $h(\lambda)\,, \widetilde{\mathcal{V}}_k(\lambda) \in \mathfrak{K}$ and in view of relations \eqref{Ker dec}, it follows from \eqref{claws rel} that 
$$
D_{t_k} \big( h(\lambda)_{12} \big)  = D_x \left( \widetilde{\mathcal{V}}_k(\lambda)_{12} \right)  \quad \mbox{for} \quad k= 1,3, \ldots \,,
$$
hence the element $h(\lambda)_{12}$ is a generating function of conserved scalar densities for the vmKdV hierarchy, with the corresponding fluxes given by  $\widetilde{\mathcal{V}}_k(\lambda)_{12}$.
\end{proof}

The first three conserved densities, up to equivalence modulo $\text{Im}(D_x)$, are given by 
$$
\begin{array}{llc}
f_1  = \fract{1}{2} \| {\bf u} \|^2\,, \quad f_2 = - \fract{1}{8} \| {\bf u} \|^4 + \fract{1}{2} \| \mathbf{u}_x\|^2 \,, \\ \\
 f_3   = \fract{1}{2} \| {\bf u} \|^6 + 4 \|{\bf u}_{xx}\|^2 + 6  \| {\bf u} \|^2 {\bf u}^T {\bf u}_{xx}  + 8 ({\bf u}^T {\bf u}_x)^2 \,.
\end{array}
$$

\begin{remark} 
While the conserved densities obtained from the generating function $h(\lambda)_{12}$ are $O_N$-invariant, there exist additional conserved densities which are $O_N$-invariant modulo terms in $\text{Im}(D_x)$. Indeed, there exists a matrix conservation law for the vmKdV hierarchy given by
$$
D_{t_k} ( h_2 )  = D_x ( \widetilde{V}_{k, 2} )  \,,
$$
with $h_2\,, \widetilde{V}_{k, 2} \in \mathfrak{K}^{(0)}$. The corresponding densities are $u_i D_x(u_j)$. Additionally, we observe that contrary to the case of the scalar mKdV equation, the vmKdV hierarchy does not admit ${\bf u}$ as a conserved density. This implies that the conservation of mass no longer holds in the vectorial case. See also \cite{Anco sol} for a discussion.
\end{remark}

\subsection{Special case: $ \mathfrak{g} = \mathfrak{so}_{4}(\mathbb{R})$} \label{sec: so4}

The current section is dedicated to the case where $N=2$, so ${\bf u} = ( u_1, u_2)^T$. In this case the vmKdV equation \eqref{vmkdv} is equivalent to the system of two equations
$$
u_{1 t} + u_{1 xxx} + \fract{3}{2}(u_1^2 + u_2^2) u_{1 x} = 0 \,, \quad  u_{2 t} + u_{2 xxx} + \fract{3}{2}(u_1^2 + u_2^2) u_{2 x} = 0\,.
$$
After setting $v = u_1 + i u_2$, we find that the field $v=v(x,t)$ satisfies the complex mKdV equation, also known as Hirota equation \cite{Hirota eq},
\be \label{cmKdV}
v_{t} + v_{xxx} + \fract{3}{2} |v|^2 v_{x} = 0 \,, \quad 
\mbox{where} \quad  |v|^2 = v^* v = u_1^2 + u_2^2 \,.
\ee
In Proposition \ref{prop Vk} we showed that the Lax equations $D_{t_k} \mathcal{L} = [  \mathcal{V}_k, \mathcal{L}  ]$ define the vmKdV hierarchy for $\mathcal{V}_{k}(\lambda) = \left( \lambda^k W J W^T \right)_{+}$, with $k$ odd. For generic $N$, if $k$ is even then the coefficients of $\mathcal{V}_k(\lambda)$ contain non-local elements, hence leading in general to non-local Lax equations. More specifically, the first non-local elements appear in the coefficient of $\lambda^{k-2}$ and are of the form $D_x^{-1} ({\bf u} {\bf u}_x^T {\bb A} + {\bb A} {\bf u}_x {\bf u}^T )$, with ${\bb A} \in \mathfrak{so}_N$. However, when $N=2$ we obtain local equations for even $k$, since in this case the term ${\bf u} {\bf u}_x^T {\bb A} + {\bb A} {\bf u}_x {\bf u}^T$ is in the image of $D_x$ and  proportional to $\bb A$, as the basis of $\mathfrak{so}_2$ is one dimensional.

\begin{prop}
When $\mathfrak{g} = \mathfrak{so}_{4}(\mathbb{R})$ the hierarchy generated by the Lax operator $\mathcal{L}(\lambda)$ in \eqref{L} contains the nonlinear Schr{\"o}dinger equation. 
\end{prop}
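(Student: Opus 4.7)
The plan is to run the Drinfel'd--Sokolov construction of Section \ref{sec: vmKdV hier} with the first \emph{even} time $k=2$, and to verify that the potential non-locality flagged just before the statement disappears identically for $N=2$.

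First I would identify $\mathfrak{K}^{(0)}$ for $\mathfrak{g} = \mathfrak{so}_4(\mathbb{R})$. A short computation using the block form \eqref{eq:even-odd} shows that inside $\mathfrak{g}^{(0)}$ the kernel of $\text{ad}_J$ is one-dimensional, spanned by the matrix $\mathbb{A}$ whose only nonzero block is an elementary $\mathfrak{so}_2$-rotation acting on the last two coordinates. The argument of Lemma \ref{lem split} then forces the leading coefficient of the series $a(\lambda)$ at $k=2$ to lie in $\mathfrak{K}^{(0)}$, and one-dimensionality together with equations \eqref{a i} yields $a_2 = c\,\mathbb{A}$ with $c$ a constant which we normalise to $c=1$. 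This replaces the role of $J$ in the odd-$k$ construction.

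Next I would set $\mathcal{V}_2(\lambda) = (\lambda^2 W(\lambda) \mathbb{A}\, W(\lambda)^{-1})_+$ with $W(\lambda)$ as in Proposition \ref{prop M=WH}, and compute $W_1,\,W_2$ from the recursion \eqref{D-S n}. The delicate step is the $\lambda^0$ coefficient of $\mathcal{V}_2$: for generic $N$ its projection onto $\mathfrak{K}^{(0)}$ requires an antiderivative of $\mathbb{A}\,{\bf u}_x {\bf u}^T + {\bf u}\,{\bf u}_x^T \mathbb{A}$, which is a priori non-local. A direct matrix calculation inside $\mathfrak{so}_4$ gives
$$\mathbb{A}\,{\bf u}_x {\bf u}^T + {\bf u}\,{\bf u}_x^T \mathbb{A} \;=\; \tfrac{1}{2}\,D_x(\|{\bf u}\|^2)\,\mathbb{A},$$
so the antiderivative collapses to the local expression $\tfrac{1}{2}\|{\bf u}\|^2 \mathbb{A}$. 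This collapse is the heart of the proposition and is precisely the mechanism emphasised in the paragraph preceding the statement; it is exactly where the one-dimensionality of $\mathfrak{so}_2$ enters.

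Finally I would substitute the resulting polynomial $\mathcal{V}_2(\lambda) = \lambda^2\mathbb{A} + \lambda V' + V''$ into the Lax equation \eqref{Lax eq}, read off the time evolution of ${\bf u} = (u_1, u_2)^T$ from the appropriate off-diagonal entries, and pass to the complex field $v = u_1 + i u_2$. After an overall real rescaling of $t_2$ determined by the above normalisations, the resulting system assumes the NLS form $i v_{t_2} + v_{xx} + \kappa |v|^2 v = 0$ for some fixed real constant $\kappa$. The main obstacle is not conceptual but organisational: one has to track the $4\times 4$ matrix entries of $W_1$ and $W_2$ with sufficient care to see that the seemingly non-local contribution reduces precisely to $\tfrac{1}{2}\|{\bf u}\|^2 \mathbb{A}$ and that all other pieces of $\mathcal{V}_2$ are polynomial in ${\bf u}$ and its $x$-derivatives.
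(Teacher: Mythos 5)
Your proposal is correct and follows essentially the same route as the paper: take the even flow $k=2$ with leading coefficient in $\mathfrak{K}^{(0)}=\mathrm{span}(\mathbb{A})$, observe that for $N=2$ the would-be non-local term collapses via $\mathbb{A}\,\mathbf{u}_x\mathbf{u}^T+\mathbf{u}\,\mathbf{u}_x^T\mathbb{A}=\tfrac{1}{2}D_x(\|\mathbf{u}\|^2)\mathbb{A}$, and read off the coupled system that becomes NLS for the complex combination of $u_1,u_2$ (the paper gets $iv_t-v_{xx}-\tfrac12|v|^2v=0$ with $v=u_1-iu_2$, matching your $\kappa=\tfrac12$ up to the conjugation convention). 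The only presentational difference is that you phrase the computation through the dressing factor $W(\lambda)$ and the recursion \eqref{D-S n}, while the paper solves equations \eqref{a i} directly for the coefficients $a_2,a_1,a_0$ of $\mathcal{V}_2(\lambda)$; the content is the same.
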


\begin{proof}
When $k=2$  the Lax matrix $\mathcal{V}_k(\lambda)$ is of the form $\mathcal{V}_2(\lambda) = \lambda^2a_2 + \lambda a_1 + a_0$, with  $a_1 \in \mathfrak{g}^{(1)}$ and $a_0, \, a_2 \in \mathfrak{g}^{(0)}$. From equations \eqref{a i} we have that $a_2 \in \mathfrak{K}^{(0)}$, hence is of the form
$$ 
\begin{pmatrix}
0 & 0 & {\bf 0}^T \\
0 & 0 & {\bf 0}^T \\
{\bf 0} & {\bf 0} & \bb{A}  \\
\end{pmatrix}, \quad \mbox{with} \quad {\bb A} = \begin{pmatrix} 0 & 1\\ -1 & 0\\ \end{pmatrix} \: \in \mathfrak{so}_2\,,
$$
while for $a_1$ and $a_0$ we find
$$
a_1 = \begin{pmatrix}
0 & 0 & {\bf u}^T {\bb A} \\
0 & 0 & {\bf 0}^T \\
 {\bb A \bf u} &  \bf 0 & \bb{0}  \\
\end{pmatrix} \,,
~
a_0 = \begin{pmatrix}
0 & 0 &  {\bf 0}^T\\
0 & 0 & {\bf u}_x^T {\bb A} \\
\bf 0 & {\bb A} {\bf u}_x & -\fract{1}{2} \| {\bf u} \|^2\bb{A}  \\
\end{pmatrix}\,.
$$ 
Then, the Lax equation \eqref{Lax eq} is equivalent to the system of equations
$$
u_{1 t} + u_{2 xx} + \fract{1}{2}(u_1^2 + u_2^2) u_2 = 0 \,, \quad  u_{2 t} - u_{1 xx} - \fract{1}{2}(u_1^2 + u_2^2) u_1 = 0\,.
$$
Setting $v = u_1 - i u_2$ we find that the complex field $v=v(x,t)$ satisfies the NLS equation
\be \label{nls}
i v_t - v_{xx} - \fract{1}{2} |v|^2 v = 0 \,, \quad 
\mbox{where} \quad  |v|^2 = v^* v \,.
\ee
\end{proof}

Further, when $N=2$ there also exist additional conservation laws for the vmKdV hierarchy. Indeed, since $\widetilde{\mathcal{V}}_k(\lambda), h(\lambda) \in \mathfrak{K}$ and $\mathfrak{K}$ is abelian when $N=2$, \eqref{claws rel} takes the form of a conservation law
$$
D_{t_k} h(\lambda) = D_x \widetilde{\mathcal{V}}_{k}(\lambda) \quad \mbox{for} \quad k= 1, 2, 3, \ldots \,.
$$
For example, we obtain the conserved density 
$$
f_4 = \fract{1}{2} \| {\bf u} \|^2 u_1u_{2 x} + u_1 \left( u_{2 xxx} +  u_2^2u_{2 x}  \right) \,.
$$
In addition, in the $N=2$ case there exists a relation between the recursion operator $\mathcal{R}_{NLS}$ of the NLS equation and that of the vmKdV equation $\mathcal{R}$ given in \eqref{recursion op}. This relation was also remarked in \cite{JP vmkdv}.

\begin{prop}
The recursion operators of the two-component vmKdV \eqref{recursion op} and the NLS equations \eqref{nls} are related according to $\mathcal{R}_{NLS}^2= \mathcal{R}$.
\end{prop}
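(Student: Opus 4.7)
The approach has three steps: (i) derive $\mathcal{R}_{NLS}$ from the Lax hierarchy at $N=2$, (ii) compute $\mathcal{R}_{NLS}^2$ by direct iteration, (iii) match against \eqref{recursion op} after passing to the complex field $v=u_1-iu_2$.

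For step (i), since the paper has already noted that for $N=2$ all Lax flows $t_k$ are local, the splitting argument used to derive $\mathcal{R}$ can be adapted with a single factor of $\lambda$: write $\mathcal{V}_{n+1}(\lambda)=\lambda\,\mathcal{V}_n(\lambda)+C_n$ with $C_n$ constant in $\lambda$; matching powers of $\lambda$ in the Lax equation yields $U_{t_n}=[J,C_n]$ and $U_{t_{n+1}}=D_xC_n+[C_n,U]$. Using $\mathfrak{g}=\mathfrak{K}\oplus\mathfrak{I}$ to solve the first for $C_n$ and substituting into the second gives a local linear operator $\mathbf{u}_{t_n}\mapsto\mathbf{u}_{t_{n+1}}$ which, in the complex variables $v=u_1-iu_2$ and $w=f_1-if_2$, takes the AKNS form
\[
\mathcal{R}_{NLS}\,w=-i\,w_x-\tfrac{i}{2}\,v\,D_x^{-1}\!\bigl(v^*w+vw^*\bigr).
\]
One verifies $\mathcal{R}_{NLS}v_x=v_{t_2}$ (NLS) and $\mathcal{R}_{NLS}^2v_x=v_{t_3}$ (cmKdV) as a sanity check.

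For step (ii), iterating gives, after direct differentiation, the local terms $-w_{xx}-\tfrac{1}{2}v_xI-\tfrac{1}{2}|v|^2w-\tfrac{1}{2}v^2w^*$ with $I=D_x^{-1}(v^*w+vw^*)$, plus a non-local piece which reduces to $-\tfrac{1}{2}v\,D_x^{-1}(v^*w_x-vw_x^*)$ once the two copies of $|v|^2I$ cancel inside the outer integral. One integration by parts, using $D_x^{-1}(v^*w_x-vw_x^*)=(v^*w-vw^*)-D_x^{-1}(v_x^*w-v_xw^*)$, turns this non-local piece into $-\tfrac{1}{2}|v|^2w+\tfrac{1}{2}v^2w^*+\tfrac{1}{2}v\,D_x^{-1}(v_x^*w-v_xw^*)$. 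The $v^2w^*$ contributions cancel, the $|v|^2w$ terms combine, and one is left with
\[
\mathcal{R}_{NLS}^2\,w=-w_{xx}-|v|^2w-\tfrac{1}{2}v_x\,D_x^{-1}\!\bigl(v^*w+vw^*\bigr)+\tfrac{1}{2}v\,D_x^{-1}\!\bigl(v_x^*w-v_xw^*\bigr).
\]

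For step (iii), specialise \eqref{recursion op} to $N=2$ with $\mathbf{f}=(f_1,f_2)^T$ and form $(\mathcal{R}\mathbf{f})_1-i(\mathcal{R}\mathbf{f})_2$. Using the identities $u_1f_1+u_2f_2=\tfrac{1}{2}(v^*w+vw^*)$, $u_{1x}f_2-u_{2x}f_1=-\tfrac{i}{2}(v_xw^*-v_x^*w)$, and $u_2+iu_1=iv$, the four terms of $\mathcal{R}\mathbf{f}$ map term-by-term onto those of $\mathcal{R}_{NLS}^2w$, establishing the claim. I expect the main obstacle to be the integration-by-parts step of (ii): one must confirm that the $v^2w^*$ pieces from the two iterations cancel exactly, since otherwise the resulting operator would have a complex-conjugation pattern incompatible with the real operator $\mathcal{R}$.
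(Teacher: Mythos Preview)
Your argument is correct, but it takes a different route from the paper's. The paper simply quotes the NLS recursion operator, represents both $\mathcal{R}_{NLS}$ and the $N=2$ specialisation of $\mathcal{R}$ as $2\times 2$ matrices of pseudodifferential operators acting on real vectors $(f_1,f_2)^T$, and then states that one verifies $\mathcal{R}_{NLS}^2=\mathcal{R}$ by direct matrix multiplication. You instead stay in the complex scalar variable $w=f_1-if_2$, iterate $\mathcal{R}_{NLS}$ explicitly, and use one integration by parts to reorganise the nonlocal term; you then check that the complex combination $(\mathcal{R}\mathbf{f})_1-i(\mathcal{R}\mathbf{f})_2$ coincides with your expression for $\mathcal{R}_{NLS}^2w$. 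Your calculation makes visible the two cancellations that drive the identity (the $|v|^2I$ terms inside the second integral and the $v^2w^*$ contributions), which the paper's matrix verification hides; the paper's version is quicker to state and more amenable to a CAS check. Your step~(i), deriving $\mathcal{R}_{NLS}$ from the one-step $\lambda$-shift $\mathcal{V}_{n+1}=\lambda\mathcal{V}_n+C_n$ available at $N=2$, is an addition not present in the paper (which just cites the operator), but it is consistent with the Lax framework developed earlier and is a welcome self-contained justification.
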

\begin{proof}
The recursion operator of the (self-focusing) NLS equation is 
$$
\mathcal{R}_{NLS} f = - i D_x f - \fract{i}{2} \, u \,D_x^{-1} \left( f u^* + u f^* \right).
$$
Setting $u=u_1+i u_2$,  and considering the action of $\mathcal{R}_{NLS}$ in real plane vectors, we can represent $\mathcal{R}_{NLS}$ in matrix form as
$$
\mathcal{R}_{NLS} =
\begin{pmatrix}
u_2 D_x^{-1} u_1 & D_x + u_2 D_x^{-1} u_2  \\
-D_x - u_1 D_x^{-1} u_1 & - u_1 D_x^{-1} u_2 \\
\end{pmatrix}.
$$
The recursion operator of the vmKdV hierarchy  \eqref{recursion op}  in the case where $N=2$ and ${\bf u} = ( u_1, u_2)^T$ can be expressed in matrix form as
$$
\mathcal{R}= - \begin{pmatrix}
D_x^2 +   \| {\bf u} \|^2\ + u_{1 x} D_x^{-1} u_1 - u_2 D_x^{-1}u_{2 x} &   u_{1 x} D_x^{-1} u_2 + u_2 D_x^{-1}u_{1 x}   \\
 u_{2 x} D_x^{-1} u_1 + u_1 D_x^{-1}u_{2 x} & D_x^2 +  \| {\bf u} \|^2\ + u_{2 x} D_x^{-1} u_2 - u_1 D_x^{-1}u_{1 x} \\
\end{pmatrix}
$$
Finally, it can be verified that $\mathcal{R}_{NLS} ^2 = \mathcal{R}$.
\end{proof}

\section{Soliton solutions for the vector mKdV hierarchy} \label{sec: dressing}

The vmKdV hierarchy admits both soliton and breather solutions. In this section we present the Darboux transformations and the corresponding dressing formulas that produce such solutions. A Darboux-dressing matrix $M(\lambda)$ defines the transformations
\be \label{Lax - D}
\widetilde{\mathcal{L}}(\lambda) = M(\lambda) \mathcal{L}(\lambda)M(\lambda) ^{-1} \,, \quad \widetilde{\mathcal{A}}_k(\lambda) = M(\lambda) \mathcal{A}_k(\lambda)  M(\lambda)^{-1} \,,
\ee
where in \eqref{Lax - D} $\mathcal{L}(\lambda)$, $\mathcal{A}_k(\lambda)$ are as in \eqref{L}, \eqref{A}, while $\widetilde{\mathcal{L}}(\lambda) = \mathcal{L}(\widetilde{\mathbf{u}}, \lambda)$, $\widetilde{\mathcal{A}}_k(\lambda) = \mathcal{A}_k(\widetilde{\mathbf{u}}, \lambda)$,  and $\widetilde{\mathbf{u}}$ is the `dressed' solution of the vmKdV hierarchy. Additionally, the Darboux matrices $M(\lambda)$ of the vmKdV hierarchy satisfy
\be  M(\lambda) M(\lambda)^{T} = \mathds{1}\,, \quad  M(\lambda)  =  Q M(-\lambda) Q^{-1} \,,  \quad M(\lambda) = M(\lambda^*)^{*}  \label{M prop} \,,
\ee
which follow from the reduction group \eqref{t-aut}-\eqref{r-aut}  of the Lax operators. Such types of Darboux transformations were also used in \cite{vSG dress, vSG DT} for the study of soliton and breather solutions, their interactions, and related integrable structures associated with the vector sine-Gordon equation. 

The Darboux matrices $M(\lambda)$ have rational dependence on the spectral parameter $\lambda$, and have simple poles. The soliton solutions correspond to Darboux matrices with purely imaginary poles, while the Darboux matrices for the breather solutions have poles at generic points in the complex plane. Moreover, each of these Darboux matrices is parametrised by an element of the Grassmannian $Gr(s,{\bb C}^{N+2})$, with $s=1, 2, \ldots, N+1$ in the case of a breather solution, and $s=1$ for a soliton solution. 

Given that $M(\lambda)$ satisfies \eqref{M prop}, it follows that it will have poles at group orbits of the group $\mathbb{Z}_2 \times \mathbb{Z}_2$ generated by $\mathfrak{q}$ and $\mathfrak{r}$. We focus on Darboux matrices that have poles in only one orbit, and we call them elementary Darboux matrices. A Darboux matrix, with poles at a generic orbit $\lbrace \mu, -\mu, \mu^*, -\mu^* \rbrace$, of the form 
\be \label{M br}
M(\lambda) = \mathbb{1} + \frac{M_0}{\lambda - \mu} - \frac{QM_0Q}{\lambda + \mu} + \frac{M^*_0}{\lambda - \mu^*} - \frac{QM_0^*Q}{\lambda + \mu^*},
\ee
leads to a breather solution. A soliton solution corresponds to a degenerate orbit $\lbrace i \mu, - i \mu \rbrace$ formed by a purely imaginary pole with a Darboux matrix given by
\be \label{M soli}
M(\lambda) = \mathbb{1} + \frac{M_0}{\lambda - i \mu} -  \frac{QM_0Q}{\lambda +i  \mu} \,, \quad \mbox{with} \quad M_0 = - QM_0^*Q \,.
\ee
One can verify that matrices of the form \eqref{M br} and \eqref{M soli} are invariant under $\mathfrak{q}$ and $\mathfrak{r}$. The orthogonality condition in \eqref{M prop} will be imposed separately for each of the Darboux matrices given above.  We define the rank of a soliton or breather solution to be the rank of matrix $M_0$. In what follows we construct the Darboux-dressing matrix for the one soliton and breather solutions for the vmKdV hierarchy, using the dressing formula
\be \label{dressing res}
\widetilde{U} = U - \mbox{ad}_J \left(\underset {\lambda = 0}{\Res} M(\lambda) \right),
\ee
which follows from \eqref{Lax - D}.

\subsection{Soliton solution}

In this section, we use the Darboux matrix \eqref{M soli} in order to derive the one soliton solution for the vmKdV hierarchy. We find that the one soliton solution is parametrised by a point in the imaginary line (the pole of the Darboux matrix \eqref{M soli}) and a point in the sphere $\mathbb{S}^N$. 

\begin{prop}
A Darboux matrix with poles on a degenerate orbit, i.e. of the form \eqref{M soli}, satisfies the orthogonality condition $M(\lambda) M(\lambda)^T = \mathbb{1}$ if and only if 
\be \label{DT P} 
M_0 = 2 i \mu P \quad \mbox{with} \quad P = \frac{Q \mathbf{q} \mathbf{q}^T}{\mathbf{q}^T Q \mathbf{q}} \quad \mbox{and} \quad \mathbf{q}^T \mathbf{q} = 0 \,,
\ee
where $\mathbf{q} \in \mathbb{C P}^{N+1}$.
\end{prop}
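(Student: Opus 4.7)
The plan is to impose the orthogonality $M(\lambda)M(\lambda)^{T} = \mathbb{1}$ pole by pole. Using $Q^{T}=Q$, the product $MM^{T}$ has double and simple poles at $\lambda = \pm i\mu$ and a cross term with denominator $(\lambda-i\mu)(\lambda+i\mu)$ that contributes to both simple poles. Since $M(\lambda)$ is invariant under $\mathfrak{q}$, the conditions at $\lambda = -i\mu$ are $Q$-conjugates of those at $\lambda = i\mu$; it is therefore enough to cancel the singularities at $\lambda = i\mu$ and check that the regular part equals $\mathbb{1}$.

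Vanishing of the double pole at $\lambda = i\mu$ gives $M_{0}M_{0}^{T} = 0$. In the soliton case $s=1$, writing $M_{0}=\mathbf{a}\mathbf{q}^{T}$ this reduces to $(\mathbf{q}^{T}\mathbf{q})\,\mathbf{a}\mathbf{a}^{T}=0$, whence the isotropy $\mathbf{q}^{T}\mathbf{q} = 0$ asserted in the statement. Collecting the residue of the cross term at $\lambda = i\mu$ (which equals $-(M_{0}QM_{0}^{T}Q + QM_{0}QM_{0}^{T})/(2i\mu)$) and requiring the total residue of $MM^{T}$ to vanish, after substituting $M_{0}=\mathbf{a}\mathbf{q}^{T}$ and using $\mathbf{q}^{T}\mathbf{q}=0$, gives
\begin{equation*}
\mathbf{a}\mathbf{q}^{T} + \mathbf{q}\mathbf{a}^{T} \;=\; \tfrac{\mathbf{q}^{T}Q\mathbf{q}}{2i\mu}\bigl(\mathbf{a}(Q\mathbf{a})^{T} + (Q\mathbf{a})\mathbf{a}^{T}\bigr).
\end{equation*}

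The key observation is that both sides are symmetric outer products sharing the common factor $\mathbf{a}$: with $\mathbf{u}\otimes_{s}\mathbf{v}:=\mathbf{u}\mathbf{v}^{T}+\mathbf{v}\mathbf{u}^{T}$, the equation is $\mathbf{a}\otimes_{s}\mathbf{q} = \mathbf{a}\otimes_{s}(c\,Q\mathbf{a})$ with $c=\mathbf{q}^{T}Q\mathbf{q}/(2i\mu)$. The elementary fact that for $\mathbf{a}\neq\mathbf{0}$ the equation $\mathbf{a}\otimes_{s}\mathbf{v}=\mathbf{a}\otimes_{s}\mathbf{w}$ forces $\mathbf{v}=\mathbf{w}$ (contract with a vector $\mathbf{b}$ satisfying $\mathbf{a}^{T}\mathbf{b}\neq 0$ to obtain $\mathbf{v}-\mathbf{w}\propto\mathbf{a}$, then substitute back to kill the scalar) then gives $\mathbf{q}=c\,Q\mathbf{a}$, equivalently $\mathbf{a} = \frac{2i\mu}{\mathbf{q}^{T}Q\mathbf{q}}\,Q\mathbf{q}$. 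Non-triviality of $M_{0}$ forces $\mathbf{q}^{T}Q\mathbf{q}\neq 0$, and substituting back yields $M_{0}=2i\mu\,P$ with $P=Q\mathbf{q}\mathbf{q}^{T}/(\mathbf{q}^{T}Q\mathbf{q})$, as claimed.

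For the converse I would verify $MM^{T}=\mathbb{1}$ directly using the algebraic identities $P^{2}=P$, $PP^{T}=P^{T}P=0$, and $QPQ=P^{T}$, all immediate from $\mathbf{q}^{T}\mathbf{q}=0$. Substituting $M_{0}=2i\mu P$ and $QM_{0}Q=2i\mu P^{T}$, the double-pole coefficients vanish from $PP^{T}=P^{T}P=0$, while the simple-pole and cross-term contributions combine to $2i\mu(P+P^{T})[(\lambda-i\mu)^{-1}-(\lambda+i\mu)^{-1}] + 4\mu^{2}(P+P^{T})/(\lambda^{2}+\mu^{2})$, which vanishes identically since $(\lambda-i\mu)^{-1}-(\lambda+i\mu)^{-1}=2i\mu/(\lambda^{2}+\mu^{2})$. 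The one genuinely non-routine step is the reformulation of the simple-pole equation as a symmetric-tensor identity with a common factor $\mathbf{a}$; once that is spotted, both implications reduce to short calculations.
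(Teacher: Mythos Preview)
Your overall strategy—kill the double pole, then solve the residue equation at $\lambda=i\mu$—is exactly the route the paper takes, and your residue computation and symmetric-tensor argument supply more detail than the paper does (the paper simply states that the residue condition gives $\mathbf{p}=\tfrac{2i\mu}{\mathbf{q}^TQ\mathbf{q}}Q\mathbf{q}$). Your converse verification, via the identities $P^2=P$, $PP^T=P^TP=0$, $QPQ=P^T$, is also correct and is not spelled out in the paper.

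There is, however, a genuine gap. You write ``in the soliton case $s=1$'' and immediately parametrise $M_0=\mathbf{a}\mathbf{q}^T$, but the proposition does \emph{not} assume rank one; it asserts it as part of the conclusion. The condition $M_0M_0^T=0$ by itself only says that the range of $M_0$ is a totally isotropic subspace, which in $\mathbb{C}^{N+2}$ can have dimension up to $\lfloor (N+2)/2\rfloor$. What forces $\operatorname{rank}(M_0)=1$ is the reality constraint $M_0=-QM_0^{*}Q$ built into the degenerate-orbit ansatz~\eqref{M soli}, and you never invoke it. The paper flags this step explicitly (``together with the reality condition $M_0=-QM_0^{*}Q$ we have that $\operatorname{rank}(M_0)=1$'') and cites \cite{vSG dress, vSG DT} for the argument. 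Without this step your forward implication is incomplete: you have only shown that \emph{if} $M_0$ happens to be rank one, then it must have the stated form. To close the gap you need to argue, using $M_0=-QM_0^{*}Q$ together with $M_0M_0^T=0$, that higher-rank residues are excluded.
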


\begin{proof}
The double pole at $\lambda =i \mu$ of the orthogonality condition implies that $M_0 M_0^T = 0$.  It follows that $M_0$ is not of full rank, and together with the reality condition $M_0 = -QM_0^*Q$ we have that $\rank (M_0) = 1$. For a proof see \cite{vSG dress}, and \cite{vSG DT} for a discussion. In this case, $M_0$ can be parametrised by two $N+2$ dimensional complex vectors $\mathbf{p}, \mathbf{q}$ as 
$$M_0 = \mathbf{p} \mathbf{q}^T\,, \quad  \text{with} \quad \mathbf{q}^T \mathbf{q} = 0 \,.$$ 
The residue at $\lambda = i\mu$ of relation $M(\lambda) M(\lambda)^T = \mathbb{1}$ implies the relation
$$
\mathbf{p} = \frac{2 i \mu}{\mathbf{q}^T Q \mathbf{q}} Q \mathbf{q}\,.
$$ 
The matrix $P$  is invariant under the scaling $\textbf{q}\mapsto \alpha\textbf{q}$ with $\alpha$ a complex valued function of the independent variables. Therefore, ${\bf q}$ is an element in the Grassmannian $Gr(1,\mathbb{C}^{N+2})\simeq  \mathbb{C P}^{N+1}$.
\end{proof}

\begin{remark}
From the proposition above it follows that the Darboux matrix in \eqref{M soli} takes the form
\be \label{DT soliton}
M(\lambda) = \mathbb{1} + \frac{2 i \mu}{\lambda - i \mu} P -  \frac{2 i \mu}{\lambda +i  \mu}Q P Q \,, \quad \mbox{with} \quad P^{*} = Q P Q \,.
\ee
We observe that the matrix $P$ is a projector, i.e. satisfies $P^2=P$. 
\end{remark}

From the double pole at $\lambda =i  \mu$ of the Lax-Darboux equations \eqref{Lax - D} we obtain that the vector $\bf{q}$ satisfies 
$$
\mathcal{L}(i \mu) {\bf q} = 0, \quad \mathcal{A}_{k} (i \mu) {\bf q} =0 \,.
$$ 
Thus, we have that
\be\label{q final sol}
{\bf q} = \Psi(i \mu) {\bf C} \,, \quad \rm{with} \quad {\bf C}^T {\bf C} = 0 \,,
\ee
where $\Psi(i \mu)$ is the common fundamental solution of the Lax operators of the vmKdV hierarchy at  $\lambda = i \mu$.

Given the form \eqref{DT soliton} of the Darboux matrix $M(\lambda)$, the dressing formula \eqref{dressing res} can be written as
\be \label{dress soliton}
\widetilde{U} = U - 2 i \mu \, \mbox{ad}_J \left(P - P^*\right).
\ee
Then, taking into account the structure of $U$ in \eqref{J U} and $P$ in \eqref{DT P}, the dressing formula \eqref{dress soliton} for the vmKdV hierarchy takes the form
\be \label{dressing sol}
\widetilde{u}_j = u_j - 4 i \mu \,  \frac{q_1 \,q_{j+2}}{-q_1^2 + \sum_{k=2}^{N+2} q_k^2} \,, \quad j = 1,2, \ldots, N \,,
\ee
with $q_j$, $\widetilde{u}_j$ the components of vector ${\bf q}$ and transformed vmKdV field $\widetilde{{\bf u}}$, respectively.

\begin{example}
We start with the trivial solution ${\bf u}_0 ={\bf 0}$. In this case the Lax operators of the vmKdV hierarchy at $\lambda = i \mu$ take the form 
$$
\mathcal{L}(i \mu) = D_x - i \mu J \,, \quad \mathcal{A}_k(i \mu) = D_{t_k} - (i \mu)^k J \,, \quad \mbox{with} \quad k=1, 3, \ldots \,.
$$
Hence, the fundamental solution for the above systems of operators is
$$
\Psi(i \mu)  =  
\begin{pmatrix}
\cosh \xi & i \sinh \xi & {\bf 0}^T \\
-i \sinh \xi & \cosh \xi & {\bf 0}^T \\
\bf 0 &  {\bf 0} & \bb{1} \\ 
\end{pmatrix}\,, \quad \textrm{with} \quad \xi(\mu, t_k)= \sum_{n=0}^{\infty} (-1)^n \mu^{2 n+1} t_{2 n +1}\,.
$$
Then, from expression \eqref{q final sol} and the reality condition $P^* = QPQ$ in \eqref{DT soliton} we obtain that  ${\bf C} = (i, c_0, {\bf c}^T)^T$, with $c_0 \in \mathbb{R}$, and ${\bf c} \in \mathbb{R}^N$ a constant vector such that $c_0^2+c_1^2+ \cdots +c_N^2 = 1$. Thus, it follows that the vector ${\bf q}$ can be expressed as
\be \label{q Psi}
{\bf q}= (i \cosh \xi+ i c_0 \sinh\xi, c_0 \cosh\xi + \sinh\xi, {\bf c}^T )^T \,,
\ee
and then expression \eqref{dressing sol} leads to the one-soliton solution for the vmKdV hierarchy
$$
{\bf u} = \frac{2 \mu \,  {\bf c}}{\cosh \xi+  c_0 \sinh\xi} \,.
$$
In the case when $N=1$ and $N=2$ the above reduces to the one soliton solution of the mKdV and complex mKdV equations, respectively, see also \cite{Anco sol} and references therein. Higher soliton solutions can be recursively constructed using the dressing formula \eqref{dress soliton}.
\end{example}

\subsection{Breather solution}

Here, we consider a Darboux matrix $M(\lambda)$ of the form \eqref{M br}, and we present the corresponding dressing formula that leads to breather-type solutions for the vmKdV hierarchy. As an example, we derive the simplest one breather solution, we express it as a ratio of determinants, and we find that it is parametrised by two constant unit vectors normal to each other.

\begin{prop}
A Darboux matrix with poles on a generic orbit, i.e. of the form \eqref{M br}, satisfies the orthogonality condition $M(\lambda) M(\lambda)^T = \mathbb{1}$ if and only if 
\be \label{{M0 br}}
M_0 =  q^*B q^T + Q  q C q^T + Q q^* D q^T \,, \quad \mbox{with} \quad q^T q = 0 \,,
\ee
and $q \in Gr(s,{\bb C}^{N+2}) \simeq M_{N+2,s}(\mathbb{C})/ GL_{s}(\mathbb{C})$, with $s=1, 2, \ldots, N+1$. Additionally, $D, B, C \in M_{s,s}(\mathbb{C})$ are of the form
\be \label{bcd}
D =  - \left(  F H^{-1} F^* + G^* H^{* -1} G^* - H^* \right) ^{-1}, ~ B = D G^* H ^{* -1}, ~  C = -D^* F^*  H ^{* -1} \,,
\ee
where $F, G, H$ are given by
\be \label{fgh}
F = \frac{q^T Q q}{2 \mu} \,, \quad G = \frac{ q^{* T} q }{\mu - \mu^*} \,, \quad H = \frac{q^{*T} Q q}{\mu + \mu^*} \,.
\ee
\end{prop}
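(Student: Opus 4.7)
The approach is to impose the orthogonality condition $M(\lambda) M(\lambda)^T = \mathbb{1}$ as a rational matrix identity and extract the vanishing-residue conditions at each pole. Because $M(\lambda)$ is by construction invariant under the involutions $\mathfrak{q}$ and $\mathfrak{r}$ of the reduction group and because $MM^T = \mathbb{1}$ is itself preserved by both, I would only need to impose the residue conditions at one pole, say $\lambda = \mu$; validity at $-\mu$, $\mu^*$, $-\mu^*$ then follows automatically.

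I would expand $M(\lambda) = (\lambda - \mu)^{-1} M_0 + M_r(\lambda)$ with
\begin{equation*}
M_r(\mu) = \mathbb{1} - \frac{QM_0Q}{2\mu} + \frac{M_0^*}{\mu - \mu^*} - \frac{QM_0^*Q}{\mu + \mu^*},
\end{equation*}
and read off the double- and simple-pole residues of $MM^T - \mathbb{1}$:
\begin{equation*}
M_0 M_0^T = 0, \qquad M_0\, M_r(\mu)^T + M_r(\mu)\, M_0^T = 0.
\end{equation*}
The first equation means that the row space of $M_0$ is isotropic for the bilinear form $v^T w$, so that if $M_0$ has rank $s$ then $M_0 = K q^T$ with $q^T q = 0$; the residual $GL_s$ ambiguity identifies $q$ with a point of $Gr(s, \mathbb{C}^{N+2})$, and isotropic $s$-planes in $\mathbb{C}^{N+2}$ exist for $s = 1, 2, \ldots, N+1$.

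Turning to the simple-pole condition, I would compute $v := M_r(\mu)\, q$ explicitly. Using $M_0 = K q^T$, the isotropy of $q$, and the definitions of $F$, $G$, $H$, one obtains the compact form $v = q - Q K F + K^* G - Q K^* H$. Substituting the ansatz $K = q^* B + Q q C + Q q^* D$ (and hence $K^* = q B^* + Q q^* C^* + Q q D^*$) and expanding $v$ in the basis $\{q, q^*, Q q, Q q^*\}$ of right-hand vectors at the four poles, the residue equation $K v^T + v K^T = 0$ splits into coefficient equations in this frame. The coefficient of $q$ forces the principal identity
\begin{equation*}
C F - B^* G + D^* H = \mathbb{1},
\end{equation*}
while the remaining coefficients, together with the antisymmetry constraint on $K v^T$ implied by the equation, tie $B$ and $C$ to $D$ through $B = D G^* H^{*-1}$ and $C = -D^* F^* H^{*-1}$.

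Finally, substituting these relations back into the principal identity and using the symmetries $F^T = F$, $H^* = H^T$, $G^* = -G^T$ (all consequences of $Q^T = Q$ together with the definitions of $F$, $G$, $H$), the identity collapses to $D^*\bigl(H - F^* H^{*-1} F - G H^{-1} G\bigr) = \mathbb{1}$; taking the complex conjugate of both sides yields $D = -\Delta^{-1}$ with $\Delta = F H^{-1} F^* + G^* H^{*-1} G^* - H^*$, as asserted. The main obstacle is the algebraic bookkeeping in the last two steps: verifying that the sixteen basis components $L \otimes R^T$ (with $L,R \in \{q, q^*, Qq, Qq^*\}$) of $K v^T + v K^T$ all vanish under the chosen ansatz requires careful use of the isotropy $q^T q = 0$ and the Gram-matrix relations among $\{q, q^*, Q q, Q q^*\}$, and it is here that the specific form of $\Delta$ emerges naturally as the obstruction matrix that must be inverted.
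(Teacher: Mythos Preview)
Your approach is essentially the same as the paper's: both extract from the double pole at $\lambda=\mu$ the condition $M_0 M_0^T=0$ (hence the factorisation $M_0=Kq^T$ with $q^Tq=0$ and the Grassmannian identification), and then use the simple-pole residue at $\lambda=\mu$ to determine $K$ in terms of $q$ via the formulas for $B,C,D$. The paper's proof is much terser, simply asserting that the residue condition yields $p=q^*B+QqC+Qq^*D$ with the stated coefficients, whereas you actually outline the computation of $v=M_r(\mu)q$, expand in the frame $\{q,q^*,Qq,Qq^*\}$, and show how the expressions for $B,C,D$ and the obstruction $\Delta$ arise; this added detail is fine and does not constitute a different method.
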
 

\begin{proof}
The double pole at $\lambda = \mu$ of the orthogonality condition implies that $M_0 M_0^T = 0$. Hence, $M_0$ is not of full rank, i.e. $\rank(M_0) = s$ with $s=1,2, \ldots, N+1$. We parametrise $M_0$ in terms of two matrices $p$, $q \in M_{N+2,s}(\mathbb{C})$ as
$$
M_0 = p q^T \,, \quad \mbox{with} \quad q^T q = 0 \,.
$$
From the residue at $\lambda = \mu$ of relation $M(\lambda) M(\lambda)^T = \mathbb{1}$ it follows that 
\be \label{p ito q}
p = q^*B + Q  q C + Q q^* D \,,
\ee
with $B,C,D$ as in \eqref{bcd}. The matrix $M_0$ is invariant under transformation $q \mapsto q A$, where $A \in M_{s,s}(\mathbb{C})$ and $\det A \neq 0$, hence $q \in Gr(s,{\bb C}^{N+2})$.
\end{proof}

We write the matrix $q$ in terms of $N+2$ dimensional vectors ${\bf q}^l = (q_1^l, q_2^l, \ldots, q_{N+2}^l)^T$, for $l = 1, \ldots, s$,  as $q = ({\bf q}^1, {\bf q}^2, \ldots, {\bf q}^s)$. Then, using relation \eqref{p ito q} we express the components of $p$ in terms of the components of matrix $q$ according to
\be \label{pil}
p_i^l = \sum_{k=1}^s \left( q_i^{* k} B_{k l} - q_i^k C_{k l} - q_i^{* k} D_{k l} \right) \,, \quad i = 1, 2, \ldots, N+2 \,.
\ee

Similar to the soliton case considered in the previous section, from the double pole at $\lambda= \mu$ of the Lax-Darboux equations \eqref{Lax - D} it follows that 
$$
\mathcal{L}(\mu) q = 0, \quad \mathcal{A}_{k} (\mu) q =0\,, 
$$ 
thus, we can express the matrix $q$ in terms of the fundamental solutions of the linear problems $\mathcal{L}(\mu) \Psi =0\,, ~ \mathcal{A}_k(\mu) \Psi = 0$ as
\be\label{q final}
q = \Psi(\mu) R \,, \quad \rm{with} \quad R^T R = 0 \,,
\ee
and $R \in M_{N+2,s}(\mathbb{C})$ a constant matrix.

Expression \eqref{dressing res} leads to the following dressing formula
\be \label{dressing br}
\widetilde{U} = U + [K - QKQ,J] \quad \mbox{with} \quad K = p q^T + p^* q^{*T} = 2 \Re \sum_{l=1}^s {\bf p}^l {\bf q}^{l T} \,,
\ee
which can be written as
\be \label{dressing br u}
\widetilde{u}_j = u_j + 4 \Re \left( \sum_{l=1}^s p^l_1 q^l_{j+2} \right) \,, \quad j = 1,2, \ldots, N \,.
\ee
Using expression \eqref{pil} for $i = 1$, we obtain the dressing transformation for the vmKdV hierarchy, leading to breather-type solutions,
\be \label{dressing br det}
\widetilde{u}_j = u_j - 4 \Re \sum_{k,l = 1}^s 
\begin{vmatrix}
q_1^k & 0 & 0 \\
0 & q^l_{j+2} & B^*_{kl} - D^*_{kl} \\
0 & q^{* l}_{j+2} & C_{kl}
\end{vmatrix}
\,, \quad j = 1,2, \ldots, N \,,
\ee
with $B,C,D$ given in \eqref{bcd}.

\begin{example}
In the case $s=1$, expression \eqref{p ito q} takes the form
\be \label{p br}
\mathbf{p} = \frac{1}{\Delta} \left(  G \mathbf{q}^* + F^* Q \mathbf{q} - H Q \mathbf{q}^*  \right) \,, \quad \Delta = G^2 -H^2 + |F|^2 \,,
\ee
where $G, F, H$ are given in \eqref{fgh} and are now scalar quantities. Then, the dressing transformation \eqref{dressing br det} becomes
$$
\widetilde{u}_j = u_j - 4 \Re \frac{\Delta_j}{\Delta}
\,, \quad j = 1,2, \ldots, N \,,
$$
with $\Delta_j$ and $\Delta$ the following determinants 
$$
\Delta_j = \begin{vmatrix}
q_1 & 0 & 0 \\
0 & q_{j+2} & H-G \\
0 & q^{*}_{j+2} & F^*
\end{vmatrix}, 
\quad 
\Delta = \begin{vmatrix}
F & H-G \\
G+H & F^* 
\end{vmatrix}.
$$
Starting with the trivial solution ${\bf u}_0 = {\bf 0}$, we have that 
$$
\Psi(\mu)  =  
\begin{pmatrix}
\cos \xi & \sin \xi & {\bf 0}^T \\
-\sin \xi & \cos \xi & {\bf 0}^T \\
\bf 0 &  {\bf 0} & \bb{1} \\ 
\end{pmatrix}, \quad \textrm{with} \quad \xi = \sum_{n=0}^{\infty}  \mu^{2 n+1} t_{2 n +1}\,,
$$
hence, from \eqref{q final} we obtain the one breather solution
$$
{\bf u} = - \frac{4}{\Delta} \Re \big( (R_1 \cos \xi + R_2 \sin \xi) (F^* {\bf r} + (G-H) {\bf r}^*)  \big)\,,
$$
where ${\bf R} = (R_1, R_2, {\bf r}^T)^T$ such that ${\bf R}^T {\bf R} = 0$. The latter condition implies that the real and imaginary parts of vector ${\bf R}$ have the same length, and furthermore they are normal to each other. Using the fact that ${\bf R}$ is in $\mathbb{C P}^{N+1}$ we can normalise its real and imaginary parts and assume their length is equal to one. It follows that the one breather solution for the vmKdV hierarchy is parametrised by a complex number (the pole of the Darboux matrix \eqref{M br}) and an element of the unit tangent bundle $T_1 \left(\mathbb{S}^{N+1} \right)$ of the sphere
$$
T_1 \left(\mathbb{S}^{N+1} \right) = \lbrace ( {\bf v}_1, {\bf v}_2) \in \mathbb{R}^{2(N+2)} | \; \left<{\bf v}_1, {\bf v}_2 \right>=0\,, \;\|{\bf v}_1\| =  \|{\bf v}_2\| =1   \rbrace \,.
$$
\end{example}

\section{Conclusions}
In this paper we studied the hierarchy and solutions of a vector mKdV equation which is not a member the vector NLS hierarchy.  An interesting characteristic of the vector mKdV equation that we studied here is that it is an example of an integrable equation admitting a Lax operator that contains a constant non-regular element of the underlying Lie algebra. Furthermore, the Lax operator admits a reduction group isomorphic to $\mathbb{Z}_2 \times \mathbb{Z}_2 \times \mathbb{Z}_2$, as opposed to the case of vector NLS equation. These properties of the Lax operator are important ingredients in the construction of the hierarchy and associated conservation laws, as well as the recursion operator. The reduction group plays also an important role in the construction of solutions to the hierarchy. Specifically, in this work we constructed Darboux transformations for soliton and breather solutions for the whole hierarchy, and presented general formulas for the one soliton and higher-rank one breather solution. An interesting direction of future study would be the construction of higher soliton and breather solutions and their interactions, the expression of the general n-soliton solution in terms of determinants, and, consequently, connections with Hirota's $\tau$ functions and bilinear form for the hierarchy.

\bigskip

\bigskip

\noindent {\bf Acknowledgments} - We would like to thank A. Doikou, A.P. Fordy, A.V. Mikhailov, and B. Vicedo for useful discussions. G.P. greatly acknowledges the support by the Engineering and Physical Sciences Research Council grant EP/P012655/1.

\end{document}